\DeclareMathAlphabet{\pazocal}{OMS}{zplm}{m}{n}
\newcommand{\ie}{i.e.}
\newcommand{\eg}{e.g.}
\newcommand*\defraccourci[3]{
  \expandafter\newcommand\csname #1#3\endcsname[1][]{#2{#1}}}
\newcommand*\generate[3]{
  \@for\@i:=#1\do{\expandafter\defraccourci\expandafter{\@i}{#2}{#3}}}
\renewcommand{\vec}[1]{\mathbf{#1}}
\newcommand{\zero}[1]{\vec{0}_{#1}}
\renewcommand{\Im}{\mathbf{I}}
\DeclareMathOperator{\Espe}{\mathbb{E}}
\newcommand{\Esp}[2][]{\Espe
\ifthenelse{\isempty{#1}}{}{_{#1}}
\ifthenelse{\isempty{#2}}{}{\left[ #2 \right]}
}
\newcommand{\ProbId}[1]{\mathbb{P}_{ideal}\left[#1\right]}
\newcommand{\ProbRe}[1]{\mathbb{P}_{real}\left[#1\right]}
\newcommand{\sumZN}[1]{\sum_{#1 \in \Z_{N}}}
\newcommand{\sumFm}[1]{\sum_{\vec{#1} \in \F_{2}^{m}}}
\DeclareMathOperator{\lsb}{lsb}
\DeclarePairedDelimiter\ceil{\lceil}{\rceil}
\DeclarePairedDelimiter\floor{\lfloor}{\rfloor}
\newcommand{\innerprod}[2]{\langle \vec{#1} , \vec{#2} \rangle}
\newcommand{\poly}{\mathsf{poly}}
\newcommand{\CSS}{\Sc^{\textup{C}}}
\newcommand{\QSS}{\Sc^{\textup{Q}}}
\newcommand{\OO}[1]{O\left( #1 \right)}
\newcommand{\OOt}[1]{\tilde{O}\left( #1 \right)}
\newcommand*{\eqdef}{\stackrel{\text{def}}{=}}
\newcommand{\HSP}{\textup{$\mathsf{HSP}$}}
\newcommand{\DHSP}{\textup{$\mathsf{DHSP}$}}
\newcommand{\DLP}{\textup{$\mathsf{DLP}$}}
\newcommand{\DCP}{\textup{$\mathsf{DCP}$}}
\renewcommand{\SS}{\textup{$\mathsf{SS}$}}
\newcommand{\uSVP}{\textup{$\mathsf{uSVP}$}}
\newtheorem{theorem}{Theorem}
\newtheorem{definition}{Definition}
\newtheorem{lemma}{Lemma}
\newtheorem{notation}{Notation}
\newtheorem{fact}{Fact}
\newcommand\bibalias[2]{%
  \@namedef{bibali@#1}{#2}%
}
\newtoks\biba@toks
\newcommand\acite[2][]{%
  \biba@toks{\cite#1}%
  \def\biba@comma{}%
  \def\biba@all{}%
  \@for\biba@one:=#2\do{%
    \@ifundefined{bibali@\biba@one}{%
      \edef\biba@all{\biba@all\biba@comma\biba@one}%
    }{%
      \PackageInfo{bibalias}{%
        Replacing citation `\biba@one' with `\@nameuse{bibali@\biba@one}'
      }%
      \edef\biba@all{\biba@all\biba@comma\@nameuse{bibali@\biba@one}}%
    }%
    \def\biba@comma{,}%
  }%
  \edef\biba@tmp{\the\biba@toks{\biba@all}}%
  \biba@tmp
}
\title{Time and Query Complexity Tradeoffs for the Dihedral Coset Problem}
\author{Maxime Remaud}
\address{Eviden Quantum Lab and Inria de Paris, Paris, France}
\author{André Schrottenloher}
\address{Univ Rennes, Inria, CNRS, IRISA, Rennes, France}
\author{Jean-Pierre Tillich}
\address{Inria de Paris, Paris, France}
\begin{document}
	\maketitle
	\begin{abstract}
	The Dihedral Coset Problem ($\DCP$) in $\Z_N$ has been extensively studied in quantum computing and post-quantum cryptography, as for instance, the Learning with Errors problem reduces to it. While the Ettinger-H{\o}yer algorithm is known to solve the $\DCP$ in $\OO{\log{N}}$ queries, it runs inefficiently in time $\OO{N}$. The first time-efficient algorithm was introduced (and later improved) by Kuperberg (SIAM J. Comput. 2005). These algorithms run in a \emph{subexponential} amount of time and queries $\OOt{2^{\sqrt{c_{\DCP}\log{N}}}}$, for some constant $c_{\DCP}$.

The sieving algorithms \textit{à la} Kuperberg admit many trade-offs between quantum and classical time, memory and queries. Some of these trade-offs allow the attacker to reduce the number of queries if they are particularly costly, which is notably the case in the post-quantum key-exchange CSIDH. Such optimizations have already been studied, but they typically fall into two categories: the resulting algorithm is either based on Regev's approach of reducing the $\DCP$ with quadratic queries to a subset-sum instance, or on a re-optimization of Kuperberg's sieve where the time and queries are both subexponential.

In this paper, we introduce the first algorithm to improve in the linear queries regime over the Ettinger-H{\o}yer algorithm. We then show that we can in fact interpolate between this algorithm and Kuperberg's sieve, by using the latter in a pre-processing step to create several quantum states, and solving a \emph{quantum} subset-sum instance to recover the full secret in one pass from the obtained states. This allows to interpolate smoothly between the linear queries-exponential time complexity case and the subexponential query and time complexity case, thus allowing a fine tuning of the complexity taking into account the query cost. We also give on our way a precise study of quantum subset-sum algorithms in the non-asymptotic regime.
	\end{abstract}

\section{Introduction}

\paragraph{Hidden Subgroup Problem.}
Let $G$ be a known group and $H$ be an unknown subgroup of $G$. Finding $H$ is a problem known as the Hidden Subgroup Problem ($\HSP$). To solve it, we can query a function $f$ which satisfies a certain property with respect to $H$:

\begin{definition}[$\HSP$] The hidden subgroup problem is defined as:
	\begin{itemize}
		\item Given: a function $f: G \rightarrow S$ that is constant and distinct on the left cosets of an unknown subgroup $H$ of a group $G$, $S$ being a finite set,
		\item Find: (a generating set of) $H$.
	\end{itemize}	
\end{definition}

Many problems used to construct primitives can be reduced to an $\HSP$ instance, for example the Discrete Logarithm and Shortest Vector problems. Shor's algorithm~\cite{Sho94}, which solves the $\DLP$ and breaks the RSA cryptosystem~\cite{RSA78} in polynomial time, can actually be extended to solve the $\HSP$ for an abelian group $G$. In the general case, it is well known that the problem requires only a \emph{polynomial} (in $\log_2 |G|$) number of queries to the function $f$~\cite{EHK04}. However, time-efficient quantum algorithms are only known for very specific instances, including abelian groups, and it is widely admitted that the generic problem remains difficult for quantum algorithms.

\paragraph{Dihedral Hidden Subgroup Problem ($\DHSP$).}
While the $\HSP$ in an abelian group is quantumly easy to solve, many post-quantum primitives are related to the $\HSP$ in the \emph{dihedral group}. In this case, even if the group is very close to be abelian (it has namely an abelian subgroup of index $2$) no polynomial-time algorithm is known. This is the case of cryptosystems based on the Unique Shortest Vector Problem ($\uSVP$) in lattice-based cryptography (such as \cite{AD97,Reg04b}) or on any problem that can be reduced to the $\uSVP$ (because of a chain of reductions between several problems \cite{Reg02,LM09,SSTX09}). More concretely, the security of several primitives reduces to the $\DHSP$. The most prominent example is the isogeny-based post-quantum key-exchange CSIDH~\cite{CLM+18}, which is similar to the Diffie-Hellman protocol \cite{DH76} except that it does not rely on the period-finding problem in an abelian group (which is solvable in quantum polynomial time), but on the difficulty to invert the group action. Several related constructions~\cite{AFMP20} such as the signature schemes SeaSign~\cite{FG19,DPV19} and CSI-FiSh~\cite{BKV19} also rely on the same problem. It should be noted that these isogeny-based cryptosystems are the only major contenders for which the quantum attacker enjoys more than a quadratic speedup, as opposed to the lattice- and hash-based finalists of the NIST post-quantum standardization process~\cite{nistcall,nistreport}.

As it has been shown in~\cite{BS18,Pei19,CCJR22}, a better understanding of the security of CSIDH comes from a careful analysis of quantum $\DHSP$ algorithms. This is the motivation of our work.

\paragraph{From the $\DHSP{}$ to the $\DCP{}$.}
Solving the $\HSP$ for the dihedral group of order $2N$ is known to reduce to the specific case where the hidden subgroup is $\{(0,0), (s,1)\}$, where $s \in \Z_N$, which can in turn be reduced to a problem known as the Dihedral Coset Problem (see \cite{EH99}).

\begin{definition}[$\DCP$] The dihedral coset problem is defined as:
	\begin{itemize}
		\item Given: an oracle outputting coset states of the form $\frac{1}{\sqrt{2}} \left(\ket{0}\ket{x}+\ket{1}\ket{x+s}\right)$ for random $x \in [\![0,N]\!]$,
		\item Find: $s \in [\![0,N]\!]$
	\end{itemize}
\end{definition}

While the Ettinger-H{\o}yer algorithm~\cite{EH99} solves the $\DCP{}$ with a linear number of queries ($\OO{\log{N}}$), it runs in exponential time $\OO{N}$. This algorithm basically consists in measuring $\OO{\log{N}}$ coset states and then classically looking among all possible values for the secret $s$ the one that matches the best a statistical test. It is possible to improve over this running time by reducing the resolution of the $\DCP$ to a subset-sum problem, as described by \cite{BS18,Bon19}, at the cost of squaring the query complexity. Though it remains exponential, the time complexity becomes $\OOt{N^{c_{\SS}}}$, where $c_{\SS}$ is a constant smaller than 1 that depends on the invoked subset-sum subroutine.

In a seminal work~\cite{Kup05}, Kuperberg initiated a family of \emph{sieving} algorithms which reach subexponential time complexities (at the cost of a subexponential query complexity). The idea here is to iterate a process of combining states to build new ones with a stronger and stronger structure, until building a very specific state that allows us to guess a bit of the secret when measured. The first algorithm~\cite{Kup05} requires subexponential quantum time \emph{and} space, but it was quickly followed by an algorithm of Regev \cite{Reg04} which requires only polynomial space. Later, Kuperberg proposed his second algorithm \cite{Kup13}, which generalized Regev's while improving its exponents, giving in the end a complexity in time (classical and quantum) and classical space of $\OOt{2^{\sqrt{2\log{N}}}}$ with a quantum space of $\OO{\poly{\log{N}}}$ and $\OOt{2^{\sqrt{2\log{N}}}}$ queries, which is the state of the art so far.

\paragraph{Our contributions.}
Let $n \eqdef \ceil{\log_2 N}$. We first propose a new algorithm using a linear number of queries. It is somewhat analogous to Regev's algorithm where instead of reducing the $\DCP$ to a classical subset-sum problem, it reduces the $\DCP$ to a \emph{quantum} subset-sum problem. In the first case, the algorithm makes $\OO{n}$ queries to find one bit of the secret, meaning it has to be iterated $\OO{n}$ times. With this new algorithm, which is inspired by~\cite{Reg02,SSTX09}, we only need $\OO{1}$ quantum subset-sum instances, $\ie$, $\OO{n}$ queries, to find the whole secret.

Second, we present a simple and natural method of interpolation between Kuperberg's second algorithm (which is the state of the art) and the new algorithm we mentioned above. It consists in using Kuperberg's algorithm to more or less preprocess the states given as input to our algorithm. The difficulty of solving the inherent quantum subset-sum problem instance will depend on the preprocessing step.

Finally, as a building block of our algorithms, we study quantum subset-sum algorithms when the problem to solve is partially in superposition. We show here that we can still improve over Grover's search even under the constraint of a polynomial quantum memory, using an exponential classical memory, with or without quantum access. Specifically, we show that the QRACM-based algorithm of~\cite{BBSS20} adapts to this case and reaches a complexity $\OOt{2^{0.2356n}}$. Without QRACM, we reach a quantum time $\OOt{2^{0.4165n}}$ using $\OO{2^{0.2334n}}$ bits of classical memory, improving over a previous algorithm by Helm and May~\cite{HM20}. In both cases, we also give non-asymptotic estimates of their complexity.

All together, we can summarize the complexity exponents of the different algorithms for solving the $\DCP$ in Table \ref{table:costsAlgos}, including the new one we propose.

\begin{table}[h!]
	\centering
	\caption{Costs of algorithms for finding the whole secret $s$.}
	\label{table:costsAlgos}
	\resizebox{\textwidth}{!}{\begin{tabular}{|c||c|c|c|c|} 
		 \hline
		 Algorithm				& Queries								& Classical Time						& Quantum Time							& Classical Space \\ [0.5ex] 
		 \hline\hline
		 Kuperberg II			& $\sqrt{2n} + \frac{1}{2} \log{n} +3$	& $\sqrt{2n} + \frac{1}{2} \log{n} +3$ 	& $\sqrt{2n} + \frac{1}{2} \log{n} +3$ 	& $\sqrt{2n}$ \\
		 Regev					& $2 \log{n} +3$ 				& $0.283 n$	& $2 \log{n} +3$				& $0.283 n$ \\
		 Ettinger-Hoyer			& $\log{n}+6.5$							& $n$									& $\log{n}+6.5$							& $\log{n}$ \\
		 \hline
		 Alg. \ref{algo:QuSS} w/ QRACM	& $\log{n}+3$					& $0.238 n + 12$						& $0.238 n + \frac{3}{2} \log{n} + 12$	& $0.238 n$ \\  
		 Alg. \ref{algo:QuSS} w/o QRACM	& $\log{n}+3$					& $< 0.2324 n$							& $0.418 n + \frac{3}{2} \log{n} + 15.5$& $< 0.2324 n$ \\ 
		 \hline
	\end{tabular}}
\end{table}

We propose two versions of our algorithm, one with QRACM and one without, both using polynomial quantum space. Note that our algorithm with QRACM outperforms other algorithms using a linear number of queries when we look at the complexity in classical time + quantum time.

\paragraph{Impact on CSIDH.}

Although Kuperberg's second algorithm is the one with the best time complexity for solving the $\DCP$, it is still interesting to look at algorithms that only use a linear number of queries, since for example, CSIDH cryptanalysis via the resolution of the $\DCP$ involves the use of a very expensive oracle. 

We give in Table \ref{table:costsEstimates} a few examples of complexity exponents for parameters of CSIDH.

\begin{table}[h!]
	\centering
	\caption{Complexity exponents for some parameters of CSIDH. The quantum space is polynomial in $n$.}
	\label{table:costsEstimates}
	\begin{tabular}{|c||c|c|c|c|c|c|} 
		 \hline
		 & Algorithm					& Queries & \thead{Classical \\ Time} & \thead{Quantum \\ Time}	 & \thead{Classical \\ Space} \\ [0.5ex] 
		 \hline\hline
		  \multirow{2}{*}{\shortstack{CSIDH-512 \\ ($n=256$)}}
		  	& Regev							& $19$	& $76$	& $19$	& $73$ \\
		 	& Alg. \ref{algo:QuSS} w/ QRACM	& $11$	& $73$	& $85$	& $61$ \\
		 \hline
		  \multirow{2}{*}{\shortstack{CSIDH-1024 \\ ($n=512$)}}
		  	& Regev							& $21$	& $148$	& $21$	& $145$ \\
		 	& Alg. \ref{algo:QuSS} w/ QRACM	& $12$	& $134$	& $148$	& $122$ \\
		 \hline
		  \multirow{2}{*}{\shortstack{CSIDH-1792 \\ ($n=896$)}}
		  	& Regev							& $23$	& $257$	& $23$	& $254$ \\
		 	& Alg. \ref{algo:QuSS} w/ QRACM	& $13$	& $226$	& $240$	& $214$ \\
		 \hline
		  \multirow{2}{*}{\shortstack{CSIDH-3072 \\ ($n=1536$)}}
		  	& Regev							& $25$	& $438$	& $25$	& $435$ \\
		 	& Alg. \ref{algo:QuSS} w/ QRACM	& $14$	& $378$	& $394$	& $366$ \\
		 \hline
		  \multirow{2}{*}{\shortstack{CSIDH-4096 \\ ($n=2048$)}}
		  	& Regev							& $25$	& $583$	& $25$	& $580$ \\
		 	& Alg. \ref{algo:QuSS} w/ QRACM	& $14$	& $500$	& $516$	& $488$ \\
		 \hline
	\end{tabular}
\end{table}

\paragraph{Organization of the Paper.}
In~\autoref{sec:prelim}, we give some preliminaries on sieving algorithms for the $\DCP{}$, and subset-sum algorithms that we will use as black boxes afterwards. In~\autoref{sec:dcp}, we recall the reduction from the $\DCP$ to the subset-sum problem, and introduce our new idea of using a \emph{quantum} subset-sum solver. Our interpolation between the sieving and subset-sum approaches is detailed in~\autoref{sec:interpolation}. Finally, our contributions on quantum subset-sum algorithms, and the details of the black boxes that we used throughout the paper, are provided in~\autoref{sec:quantum-subsum}.

\section{Preliminaries}\label{sec:prelim}

In this section, we cover the main principles of sieving algorithms for the $\DCP{}$, including Kuperberg's and Regev's algorithms~\cite{Kup05,Kup13,Reg04,CJS14}. We assume knowledge of the quantum circuit model, $\ie$, the $\ket{\cdot{}}$ notation of quantum states, and basic quantum operations such as CNOT, Toffoli, the Quantum Fourier Transform (QFT), \emph{etc.} 

We estimate the \emph{time} complexity of a quantum algorithm in the quantum circuit model, as a number of \emph{$n$-bit arithmetic operations}. That is, instead of counting precisely the quantum gates, we count the $n$-bit XORs, additions, subtractions, comparisons, QFTs, depending on the complexity parameter $n$.

We work with different types of memory:
\begin{itemize}
	\item quantum memory ($\ie$, qubits): some $\DCP{}$ algorithms ($\eg$, Kuperberg's first algorithm~\cite{Kup05}) need to store many coset states, which creates a subexponential quantum memory requirement;
	\item classical memory with quantum random-access (QRACM): the QRACM (or qRAM, QROM in some papers) is a specialized hardware which stores classical data and accesses this data in quantum superposition. That is, we assume that given a classical memory of $M$ bits $y_0, \ldots, y_{M-1}$, the following unitary operation: \[ \ket{x} \ket{i} \xmapsto{\mathsf{Access}} \ket{x \oplus y_i} \ket{i} \] can be implemented in time $\OO{1}$. QRACM is a very common assumption in quantum computing, and it appears in several works on the $\DCP{}$~\cite{Kup13,Pei19} but also on collision-finding~\cite{BHT98} and subset-sum algorithms~\cite{BBSS20}.
	\item classical memory without quantum random-access: the $\mathsf{Access}$ operation can be implemented in $M$  arithmetic operations using a sequential circuit. This removes the QRACM assumption, and we fall back on the basic quantum circuit model. Some algorithms using QRACM can be re-optimized in a non-trivial way when memory access is costly, and this is the case of subset-sum~\cite{HM20}.
\end{itemize}

\subsection{Phase Vectors and Kuperberg's First Algorithm}\label{sec:KupI}

We will consider in what follows that we have access to an oracle outputting \emph{phase vectors} denoted by $\ket{\psi_k}$ and defined as:
\begin{equation*}
	\ket{\psi_k} \eqdef \frac{1}{\sqrt{2}} \left( \ket{0} + \omega_N^{sk} \ket{1} \right)
\end{equation*}
where $\omega_N = \exp(2 \iota \pi/ N)$, $\iota = \sqrt{-1}$, and $k$ is a \emph{known} uniformly distributed random element of $\Z_N$. They are obtained from coset states (the input states of the $\DCP$) by applying a QFT on $\Z_N$ on the first register and then measuring this register, since we have
\begin{equation*}
	(QFT_{N} \otimes \Im) \left(\frac{1}{\sqrt{2}} \left(\ket{x}\ket{0}+\ket{x+s}\ket{1}\right)\right) = \frac{1}{\sqrt{N}} \sumZN{k} \omega_N^{kx} \ket{k} \ket{\psi_k}.
\end{equation*}
Finding $s \in [\![0,N-1]\!]$ from a collection of phase vectors $\ket{\psi_k}$ for known uniformly distributed random $k \in [\![0,N-1]\!]$ solves both the $\DCP$ and the $\DHSP$.

\paragraph{Subexponential Algorithms.}
We will now give more details on the algorithms solving the $\DCP$ in subexponential time. Until the end of this section, it can be assumed that $N=2^n$ for the sake of simplicity, but the algorithms discussed here work for any value of $N$.

Kuperberg's initial observation is that one can combine two phase vectors $\ket{\psi_p}$ and $\ket{\psi_q}$ to construct a new phase vector. Indeed, we have:
\begin{equation*}
	\ket{\psi_p,\psi_q} \xmapsto{\text{CNOT}} \frac{1}{\sqrt{2}} \left( \ket{\psi_{p+q},0} + \omega_N^{yq}\ket{\psi_{p-q},1} \right) \enspace.
\end{equation*}
A measurement of the second qubit will leave the first one either in the state $\ket{\psi_{p - q}}$, or $\ket{\psi_{p + q}}$, depending on the bit measured. With probability $1/2$, we get $\ket{\psi_{p - q}}$. By noticing on the other hand that $\ket{\psi_{N/2}} = \Hm \ket{\lsb(s)}$ ($\lsb(s)$ being the least significant bit of $s$), Kuperberg designed a quite simple algorithm which groups the phase vectors according to their last non-zero bits. They are then combined two by two using CNOT gates. Half of the time, the difference is obtained, and it contains as many zeroes as there were bits in common. The resulting phase vectors are regrouped and the process is reiterated. As proven in~\cite{Kup05}, the target state $\ket{\psi_{N/2}}$ is then obtained in subexponential time.

\subsection{Regev's Algorithm}\label{sec:Reg}

Kuperberg's first algorithm requires to store, at each time, a subexponential number of phase vectors; thus, it has subexponential quantum memory complexity. Regev~\cite{Reg04} modified the combination routine to reduce the number of qubits to polynomial, while keeping the time complexity subexponential.

The new routine combines $m$ phase vectors for a well-chosen $m$ (to minimize the overall complexity).

Let $B$ be some chosen, arbitrary value. We start with $m$ phase vectors $\ket{\psi_{k_1}}, \ldots, \ket{\psi_{k_m}}$. We tensor the vectors, $\ie$, we obtain a sum:
\[ \bigotimes_i \ket{\psi_{k_i}} =  \sum_{\vec{b} \in \{0,1\}^m} \omega_N^{ s \innerprod{b}{k} } \ket{\vec{b}} \]
We compute $\floor{\innerprod{b}{k} / B}$ into a new qubit register, and measure a value $V$. This projects the state on the vectors $\vec{b}$ such that $\floor{\innerprod{b}{k} / B} = V$. We choose $m$ and the size of $B$ such that on average two solutions $\vec{b}$ and $\vec{b'}$ occur. The state becomes proportional to:
\[ \ket{\vec{b}} + \omega_N^{ s (\innerprod{b}{k} - \innerprod{b'}{k}) } \ket{\vec{b'}} \enspace. \]
Finally, we remap $\vec{b}, \vec{b'}$ to $0,1$ respectively. We have obtained a phase vector $\ket{\psi_k}$ with a label $k = \innerprod{b}{k} - \innerprod{b'}{k} \leq B$. Then, step by step, we can make the labels decrease until we obtain the label 1. As remarked in~\cite{BS18}, we can also obtain any label whose value is invertible modulo $N$, by multiplying all initial labels by their inverse, and applying normally the algorithm afterwards. In particular, when $N$ is odd, we can obtain all powers of two. 

Regev~\cite{Reg04} and later Childs, Jao and Soukharev~\cite{CJS14} used this combination routine to get an algorithm with $\OOt{2^{\sqrt{2n \log_2 n}}}$ queries and $\OO{n}$ quantum memory.

\subsection{Kuperberg's Second Algorithm}\label{sec:KupII}

Like the two previous ones, Kuperberg's \emph{collimation sieve}~\cite{Kup13} is a hybrid quantum/classical procedure starting from the initial phase vectors, where we need to perform both quantum computations which create new vectors, and classical computations which give their description. The difference is that phase vectors are now multi-labeled:
\[ \ket{\psi_{k_1, \ldots, k_\ell}} = \frac{1}{\sqrt{\ell}} \sum_i \omega_N^{s k_i} \ket{i} \enspace. \]
In order to control these new phase vectors, we need to know the list of all their labels. These lists will become of subexponential size, although the vector itself requires only a polynomial amount of qubits. This is why the algorithm combines a polynomial quantum memory with a subexponential \emph{classical} memory.

The combination subroutine is similar to Regev's, except that it does not necessarily reduce the list of labels down to 2. Instead, the two phase vectors are combined into a new one holding a similar number of labels, as shown in~\autoref{algo:kup-comb}.

\begin{algorithm}[tb]
	\caption{Combination routine in the collimation sieve.} \label{algo:kup-comb}
	\begin{algorithmic}[1]
		\Statex \textbf{Input:} $\ket{\psi_{k_1, \ldots, k_\ell}}$, $\dots$, $\ket{\psi_{k_1', \ldots, k_{\ell'}'}}$ such that $\forall i \leq \ell, \forall j \leq \ell', k_i < 2^a, k_j' < 2^a$, the lists of the labels
		\Statex \textbf{Output:} $\ket{\psi_{v_1, \ldots, v_{\ell''}}}$ such that $\forall i, v_i < 2^{a-r}$
		\State \emph{Quantum:} Tensor the vectors: $\sum_{i \leq \ell,j \leq \ell'} \omega_N^{s (k_i + k_j')} \ket{i}\ket{j}$
		\State \emph{Quantum:} Compute the function $i,j \mapsto \floor{ (k_i + k_j') / 2^{a-r} }$ into an ancilla register
		\State \emph{Quantum:} Measure the register, obtain a value $V$. The state collapses to:
		\[ \sum_{i, j | \floor{ (k_i + k_j') / 2^{a-r} } = V} \omega_N^{s (k_i + k_j')} \ket{i}\ket{j} \]
		\State \emph{Classical:} Compute $\{ (i, j) | \floor{ (k_i + k_j') / 2^{a-r} } = V \}$, of size $\ell''$
		\State \emph{Quantum:} Apply to the state a transformation that maps the pairs $(i,j)$ to $[\![0, \ell''-1]\!]$.
		\State Return the state and the vector of corresponding labels $k_i + k_j'$.
	\end{algorithmic}
\end{algorithm}

Originally, Kuperberg uses classical memory with quantum random-access (QRACM), an approach later followed by Peikert~\cite{Pei19}. However it only improves the trade-offs with respect to the total quantum time, and it is not necessary to reach the optimal complexity. Also, the collimation procedure presented here is from later works such as~\cite{Pei19}, as it allows to easily deal with arbitrary group orders.

Without QRACM, Steps 2 and 5 in~\autoref{algo:kup-comb} require a time complexity $\OO{ \max(\ell, \ell', \ell'') }$. This is also the classical time complexity required by Step 4, assuming that the lists of labels are sorted.

\paragraph{The Algorithm as a Merging Tree.}
Starting from a certain set of multi-labeled phase vectors, we can identify them with the classical lists of their labels. The combination step operates on these lists like a purely classical list-merging algorithm, in which new lists of labels are formed from the pairs of labels satisfying a certain condition. This algorithm can be represented as a \emph{merging tree} in which all nodes are lists of labels (resp. phase vectors).

On the classical side, Kuperberg's algorithm is thus similar to Wagner's generalized birthday algorithm~\cite{Wag02}, which is a binary merging tree of depth $\sqrt{n}$. In Wagner's algorithm, the goal is to impose stronger conditions at each level which culminate in a full-zero sum. Here, the same conditions are imposed on the labels in the phase vectors. A success in the list-merging routine is equivalent to a success in the collimation routine (we obtain a phase vector with the wanted label).

The query, time and memory complexities depend on the shape of the tree. Even though the conditions are actually chosen at random at the measurement step in~\autoref{algo:kup-comb}, we can consider them chosen at random \emph{before} the combination to analyze the algorithm.

\paragraph{Example: Optimal Time.}
The optimal time complexity is obtained with a tree with $\sqrt{2 n}$ levels. It starts with lists of size 2, $\ie$, two-labeled phase vectors. At level $i$ starting from the leaves, the lists have (expected) size $2^i$, and they are merged pairwise into a list of size $2^{i + 1}$. This means that we can eliminate $2i - (i+1) = i-1$ bits. So we should use $h$ levels, where:
\[ 1 + \ldots + h-1 = n \implies h \simeq \sqrt{2n} \enspace. \]
Since each level of merging doubles the number of lists, there are in total $2^{\sqrt{2n}}$ leaves (hence queries). The (classical) cost of merging, over the whole tree, is equal to the sum of all list sizes. It is also the (quantum) cost of the relabeling operations: $\sum_i 2^{\sqrt{2 n}-i} \times 2^i = \OO{ \sqrt{2 n} 2^{\sqrt{2 n}}}$.

To compute the memory complexity, one must note that it is not required to store whole levels of the merging tree. Instead, we compute the lists (resp. the phase vectors) depth-first, and store only one node of each level at most, $\ie$ $\sqrt{2 n}$ phase vectors. For the same reason, the classical memory complexity is $\OO{2^{\sqrt{2 n}}}$.

\paragraph{Precise Analysis.}
The analysis above is only performed on average, and in practice, there is a significant variance in the list sizes. More precise analyses were performed in~\cite{Pei19,CCJR22}. It follows from them that the list size after merging should be considered smaller than the expected one by an ``adjusting factor'' $\sqrt{3 / (2\pi)}$. Furthermore, the combination may create lists that are too large, which must be discarded. The empirical analysis of Peikert~\cite{Pei19} gives a factor $(1 - \delta)$ of loss at each level, with $\delta = 0.02$.

The smaller factor in list sizes simply means that at level $i$, we will not exactly eliminate $i-1$ bits, but $i- c$ where $c = \log_2 \left( 1 + \sqrt{\frac{3}{2\pi}} \right) \simeq 0.76$. (We can control the interval size in~\autoref{algo:kup-comb} very precisely.) Thus $h$ is solution to:
\[ \sum_{i=1}^h (i-c) = n \implies \frac{h^2}{2} - ch = n \implies h \simeq c + \sqrt{2n + 4c^2} \enspace. \]

Finally, the loss at each level induces a global multiplicative factor $(1-\delta)^{-h} = 2^{-\log_2 (1-\delta) h} \simeq  2^{0.029 h}$ on the complexity. Therefore, accounting for the adjusting factor and discards, the query complexity of the sieve is:
\begin{equation}\label{eq:kupcomp}
2^{1.029 \left(0.76 + \sqrt{2n + 2.30} \right) }
\end{equation}
and the quantum time complexity multiplies this by a factor $0.76 + \sqrt{2n + 2.30}$. The difference with the exact $2^{\sqrt{2n}}$ is not negligible, but not large either. At $n = 4608$, the two exponents are respectively $99.6$ and $96$.

\paragraph{Obtaining All the Bits of the Solution.}
The analysis above applies if we want to obtain a specific label, $\eg$, the label 1. Afterwards, the algorithm can be repeated $n$ times. For a generic $N$ (not a power of 2), one typically produces all labels which are powers of 2 and uses a QFT to directly recover the secret. This is done for example in~\cite{BS18}. Peikert~\cite{Pei19} proposed a more advanced method to recover multiple bits of the secret with each phase vector.

\begin{lemma}\label{lem:sum}
	Let $\alpha > 0$ and $n$ be a positive integer. We have \[ \sum_{i=1}^n 2^{\alpha \sqrt{i}} = \OO{\sqrt{n} 2^{\alpha \sqrt{n}}}. \]
\end{lemma}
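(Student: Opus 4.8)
The plan is to exploit the fact that the summand $2^{\alpha\sqrt{i}}$ is strictly increasing in $i$, so that the sum is dominated by the terms near $i = n$, and that these terms decay \emph{geometrically} as the index moves away from the top. The naive bound, replacing every term by the largest one, gives only $\sum_{i=1}^n 2^{\alpha\sqrt{i}} \le n\, 2^{\alpha\sqrt{n}}$; the whole content of the lemma is to shave a factor $\sqrt{n}$ off this, which is exactly the gap between the trivial estimate and the claim.

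First I would reindex the sum from the top, writing $j = n - i$, so that it becomes $\sum_{j=0}^{n-1} 2^{\alpha\sqrt{n-j}}$. The key elementary estimate is a lower bound on the gap in the exponent:
\[ \sqrt{n} - \sqrt{n-j} = \frac{j}{\sqrt{n} + \sqrt{n-j}} \ge \frac{j}{2\sqrt{n}}, \]
using $\sqrt{n-j} \le \sqrt{n}$. Hence each term is at most $2^{\alpha\sqrt{n}} \cdot 2^{-\alpha j/(2\sqrt{n})}$, and summing the resulting geometric progression (extending it to $j = \infty$ for an upper bound) yields
\[ \sum_{i=1}^n 2^{\alpha\sqrt{i}} \le 2^{\alpha\sqrt{n}} \sum_{j=0}^{\infty} 2^{-\alpha j/(2\sqrt{n})} = \frac{2^{\alpha\sqrt{n}}}{1 - 2^{-\alpha/(2\sqrt{n})}}. \]

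The hard part will be purely the quantitative lower bound on the denominator $1 - 2^{-\alpha/(2\sqrt{n})}$: I must show it is of order $1/\sqrt{n}$ so that the prefactor becomes $\OO{\sqrt{n}}$. For this I would use the elementary inequality $1 - 2^{-t} \ge \tfrac{1}{2} t \ln 2$, valid for $t$ in a neighbourhood of $0$ (concretely $0 \le t \le 1/\ln 2$, where it holds by concavity of $t \mapsto 1 - 2^{-t}$ together with a check at the endpoint). Since $\alpha$ is a fixed constant, $t = \alpha/(2\sqrt{n}) \to 0$, so the inequality applies for all large enough $n$ and gives $1 - 2^{-\alpha/(2\sqrt{n})} \ge \frac{\alpha \ln 2}{4\sqrt{n}}$, whence
\[ \sum_{i=1}^n 2^{\alpha\sqrt{i}} \le \frac{4}{\alpha \ln 2}\,\sqrt{n}\, 2^{\alpha\sqrt{n}} = \OO{\sqrt{n}\, 2^{\alpha\sqrt{n}}}, \]
the hidden constant depending only on $\alpha$; the finitely many small $n$ for which the neighbourhood condition fails are absorbed into the $\OO{\cdot}$ constant. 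An essentially equivalent alternative would be to bound the sum by $\int_1^{n+1} 2^{\alpha\sqrt{x}}\,dx$ and evaluate the integral via the substitution $u = \sqrt{x}$ followed by integration by parts, whose leading term is again $\frac{2\sqrt{n+1}}{\alpha \ln 2}\, 2^{\alpha\sqrt{n+1}} = \OO{\sqrt{n}\, 2^{\alpha\sqrt{n}}}$, using $2^{\alpha\sqrt{n+1}} \le 2^{\alpha}\, 2^{\alpha\sqrt{n}}$.
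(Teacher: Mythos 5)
Your proof is correct, but it takes a genuinely different route from the paper's. The paper groups the indices $i$ into blocks between consecutive perfect squares $(j^2, (j+1)^2]$, bounds every term in a block by $2^{\alpha(j+1)}$, and then evaluates the resulting arithmetic--geometric sum $\sum_j (2j+1)2^{\alpha(j+1)}$ via the geometric series formula; there the factor $\sqrt{n}$ emerges as the length $2j+1 \approx 2\sqrt{n}$ of the top block. You instead reindex from the top and use the algebraic identity $\sqrt{n}-\sqrt{n-j} = j/(\sqrt{n}+\sqrt{n-j}) \ge j/(2\sqrt{n})$ to dominate the sum by a single geometric series with ratio $2^{-\alpha/(2\sqrt{n})}$; the factor $\sqrt{n}$ then emerges as the decay length $\bigl(1 - 2^{-\alpha/(2\sqrt{n})}\bigr)^{-1} = \Th{\sqrt{n}}$, which you control with the concavity bound $1 - 2^{-t} \ge \tfrac12 t \ln 2$ on $0 \le t \le 1/\ln 2$ (your endpoint check $1 - e^{-1} \ge \tfrac12$ is valid, and the finitely many small $n$ excluded are indeed absorbed into the constant). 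Both arguments are elementary and yield a constant depending only on $\alpha$; yours is arguably tighter and more transparent, since it quantifies directly that only the top $\Th{\sqrt{n}}$ terms contribute at the scale of the maximum, whereas the paper's block decomposition needs the slightly heavier bookkeeping of the derivative-of-geometric-series formula. Your integral-comparison variant is also sound and would serve equally well.
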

\begin{proof}
	When $i$ is a perfect square, let say $i=j^2$, we have that $2^{\alpha \sqrt{i}} = 2^{\alpha j}$. Now for any $i$ between the two perfect squares $(j-1)^2$ and $j^2$, we have the upper bound $2^{\alpha \sqrt{i}} < 2^{\alpha j}$. In order to use this, we rewrite the sum:
	\begin{align*}
		\sum_{i=1}^n 2^{\alpha \sqrt{i}} & \le \sum_{j=0}^{\ceil{\sqrt{n}} - 1} \sum_{k=j^2+1}^{(j+1)^2} 2^{\alpha \sqrt{k}} \\
		& \le \sum_{j=0}^{\ceil{\sqrt{n}} - 1} \sum_{k=j^2+1}^{(j+1)^2} 2^{\alpha (j+1)} \\
		& = \sum_{j=0}^{\ceil{\sqrt{n}} - 1} (2j+1) 2^{\alpha (j+1)}
	\end{align*}
	Using the formula for geometric series, we obtain:
	\begin{align*}
		\sum_{i=1}^n 2^{\alpha \sqrt{i}} & \le 2^{\alpha + 1} \frac{(2^\alpha - 1) \ceil{\sqrt{n}} 2^{\alpha \ceil{\sqrt{n}}} - 2^\alpha (2^{\alpha \ceil{\sqrt{n}}} - 1)}{(2^\alpha - 1)^2} + 2^\alpha \frac{2^{\alpha \ceil{\sqrt{n}}} - 1}{2^\alpha - 1} \\
		& = \frac{2^\alpha}{2^\alpha - 1} \left((2 \ceil{\sqrt{n}} + 1) 2^{\alpha \ceil{\sqrt{n}}} - \frac{2^{\alpha + 1}}{2^\alpha - 1} (2^{\alpha \ceil{\sqrt{n}}} - 1) - 1 \right) \\
		& \le \frac{2^\alpha}{2^\alpha - 1} (2 \ceil{\sqrt{n}} + 1) 2^{\alpha \ceil{\sqrt{n}}}  \enspace.
	\end{align*}
	which allows us to conclude the proof, $\alpha$ being fixed.\qed
\end{proof}

\paragraph{Obtaining Partially Collimated Labels.}
In this paper, we will consider the task of obtaining labels which, instead of reaching a prescribed $k$, match $k$ on a certain number of bits only (we can say that the phase vectors are \emph{partially collimated}), let say $i$: this complexity is of order $2^{\sqrt{2i}}$. By Lemma~\ref{lem:sum}, we can obtain a sequence of $i$ phase vectors collimated on $1, \ldots, i$ bits with a query complexity: $\sum_{j = 1}^i 2^{\sqrt{2j}} = \OO{\sqrt{i} 2^{\sqrt{2i}}}$.

\subsection{The Subset-Sum Problem}\label{subsection:subset-sum-prelim}

As we will see in~\autoref{sec:dcp}, the $\DCP$ can be reduced to the Subset-sum problem; this leads to the most query-efficient algorithms, and depending on the cost of queries, to the best optimization for some instances.

\begin{definition}[Subset-sum]
A subset-sum instance is given by $(v, \vec{k}), v \in \Z_N, \vec{k} \in \Z_N^m$ for some modulus $N$ and integer $m$. The problem is to find a vector (or all vectors) $\vec{b} \in \{0,1\}^m$ such that $\innerprod{b}{k} = v \mod N$.
\end{definition}

When $m \simeq n = \ceil{\log_2 N}$, there is one solution on average. The instance is said to be of \emph{density one}. Heuristic classical and quantum algorithms based on the \emph{representation technique}~\cite{HJ10,BCJ11} allow to solve it in exponential time in $n$. In the following, we will use these algorithms as black boxes. We first need a classical subset-sum solver. 

\begin{fact}\label{fact:ClSS} 
We have a classical algorithm $\CSS$ which, on input a subset-sum instance $(v, \vec{k})$ of density one, finds all solutions. It has a time complexity in $\OOt{2^{c_{c\SS} n}}$ where $c_{c\SS} < 1$.
\end{fact}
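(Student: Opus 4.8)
The plan is to instantiate $\CSS$ with a heuristic algorithm based on the \emph{representation technique} of Howgrave-Graham and Joux~\cite{HJ10}, or its refinement due to Becker, Coron and Joux~\cite{BCJ11}. The baseline is the meet-in-the-middle approach: split the $m \simeq n$ coordinates into two halves, build the sorted list of all partial sums $\innerprod{b_1}{k} \bmod N$, and look for a match with $v - \innerprod{b_2}{k} \bmod N$. This already finds every solution in time and memory $\OOt{2^{n/2}}$, so the bare statement holds with $c_{c\SS} = 1/2 < 1$; the point of the representation technique is to push the exponent well below $1/2$.

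To improve the exponent, I would not split the support of the solution but instead exploit the many additive decompositions of a fixed target. Assuming (after looping over the $\OO{n}$ possible Hamming weights, which costs only a polynomial factor) that the sought $\vec{b} \in \{0,1\}^m$ has weight $m/2$, write $\vec{b} = \vec{b}^{(1)} + \vec{b}^{(2)}$ with $\vec{b}^{(1)}, \vec{b}^{(2)} \in \{0,1\}^m$ of weight $m/4$ and disjoint support. There are $R = \binom{m/2}{m/4} = \OOt{2^{m/2}}$ such \emph{representations} of the same $\vec{b}$. Now impose an auxiliary modular constraint $\innerprod{b^{(1)}}{k} \equiv r \pmod{M}$ for a random $r$ and a modulus $M \approx R$: on average exactly one representation of $\vec{b}$ survives, while the two constrained half-lists are much shorter than their meet-in-the-middle counterparts. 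Building these half-lists recursively (one further level for~\cite{HJ10}, and for~\cite{BCJ11} additionally allowing $\{-1,0,1\}$ digits so that extra cancellations enlarge $R$), and balancing the list sizes to minimize the total work subject to at least one surviving representation, yields the claimed exponents $c_{c\SS} \simeq 0.337$ for~\cite{HJ10} and $c_{c\SS} \simeq 0.291$ for~\cite{BCJ11}, both strictly below $1$.

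The main obstacle is that this analysis is \emph{heuristic}: it relies on the assumption that the partial sums $\innerprod{b}{k} \bmod N$ behave like independent uniform elements of $\Z_N$, so that the intermediate list sizes concentrate around their expectations and the modular filtering leaves $\OO{1}$ representations. This is the standard heuristic under which~\cite{HJ10,BCJ11} operate, and it is what lets us treat $\CSS$ as a black box. A secondary point is that the statement asks for \emph{all} solutions rather than one; but at density one there are $\OO{1}$ solutions in expectation, and one recovers them all by repeating the procedure with fresh random constraints $r$ a polynomial number of times (equivalently, by enumerating enough of the $R$ representations), which only affects the hidden polynomial factors absorbed in the $\OOt{\cdot}$ notation.
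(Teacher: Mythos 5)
Your proposal is correct and takes essentially the same route as the paper: the paper states this as a black-box Fact, instantiating $\CSS$ with precisely the heuristic representation-technique algorithms of~\cite{HJ10,BCJ11} (and their refinement in~\cite{BBSS20}, which gives the quoted $c_{c\SS} = 0.283$), under the same standard heuristic that partial sums behave uniformly. Your reconstruction of the representation idea, the exponents $0.337$ and $0.291$, the fallback meet-in-the-middle bound $c_{c\SS} = 1/2 < 1$, and the observation that all $\OO{1}$ density-one solutions are recovered with only polynomial overhead are all consistent with what the paper assumes.
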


Here, the parameter $c_{c\SS}$ is the best asymptotic exponent that we can obtain for classical subset-sum algorithms. If there are no constraints on the memory, we can take $c_{c\SS} = 0.283$ which is the best value known at the moment~\cite{BBSS20}.

In this paper, we will also need (quantum) algorithms solving a more difficult problem, in which $\vec{k}$ is fixed, but the target $v$ is \emph{in superposition}. We will call this type of algorithm a \emph{quantum} subset-sum solver. 

\begin{fact}\label{fact:QuSS}
	We have a quantum algorithm $\QSS$ which has a complexity cost in $\OOt{2^{c_{q\SS} n}}$ (where $c_{q\SS} < 1$), which, given an error bound $\varepsilon$, given a known (classical) $\kv \in \Z_N^m$ and on input a quantum $v$, maps:
	\[ \ket{v} \ket{\bv} \mapsto \ket{v} \ket{\bv \oplus \QSS(v)} \]
	where, for a proportion at least $1-\varepsilon$ of all $v$ admitting a solution, $\QSS(v)$ is selected u.a.r. from the solutions to the subset-sum problem, $\ie$, from the set $\{ \vec{b} | \; \innerprod{b}{k} = v\}$.
\end{fact}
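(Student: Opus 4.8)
The plan is to construct $\QSS$ by taking a standard quantum subset-sum algorithm based on the representation technique --- the QRACM variant adapted from~\cite{BBSS20} or the memory-constrained variant of~\cite{HM20}, both detailed in~\autoref{sec:quantum-subsum} --- and observing that it can be run \emph{coherently} with the target held in superposition. The structural remark that makes this possible is that such an algorithm factors into two phases: a preprocessing phase that manipulates only the fixed classical knapsack $\kv$ (building the bottom lists of partial vectors together with their partial inner products, sorting them, and loading them into QRACM or into a sequentially addressable classical store), and a search phase (a Grover search or a quantum walk) whose \emph{only} dependence on the target is through the modular constraint imposed at the root of the merging tree. One arranges the random intermediate residues so that every list built from $\kv$ is $v$-independent and only the top-level check reads $v$. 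Since $\kv$ and all data structures derived from it are classical and fixed, each operation of the search phase --- modular additions, comparisons against a register holding $v$, and the $\mathsf{Access}$ lookups --- is a reversible unitary defined uniformly for every value of $v$. Applied to an input superposition $\sum_v \alpha_v \ket{v}$, the fixed search circuit thus produces $\sum_v \alpha_v \ket{v}\ket{\text{sol}(v)}\ket{\text{garbage}(v)}$ by linearity.

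For the complexity, I would argue that the classical preprocessing is dominated by the search phase (this is exactly what fixes the optimal exponent $c_{q\SS}$), and that executing a \emph{fixed} reversible circuit in superposition has the same gate count as executing it on a single input, so the coherent algorithm retains its $\OOt{2^{c_{q\SS}n}}$ cost. The clean map of the statement then follows from the compute--copy--uncompute pattern: run the search phase, copy the recovered solution into the output register $\ket{\bv}$ with a layer of CNOTs, and run the search phase in reverse. The reversal restores all ancillae and disentangles $\text{garbage}(v)$ on every branch simultaneously, precisely because the search circuit is reversible and its \emph{definition} depends only on the classical $\kv$, not on the quantum value $v$; the same inverse unitary uncomputes all branches. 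This yields $\ket{v}\ket{\bv} \mapsto \ket{v}\ket{\bv \oplus \QSS(v)}$ with $\ket{v}$ left intact, the whole map being realized as a randomized unitary indexed by a classically chosen random tape.

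The hard part is the last requirement: that $\QSS(v)$ be distributed \emph{uniformly} over $\{\vec{b} : \innerprod{b}{k} = v\}$ for all but an $\varepsilon$-fraction of solvable $v$. A representation-technique run with a fixed tape does not return an arbitrary solution, only one whose chosen decomposition is consistent with the sampled intermediate residues, and a priori this biases the output; moreover, since Grover/quantum-walk iteration counts are tuned to the \emph{expected} number of marked elements, a circuit calibrated for the typical solution count will fail on targets with atypically many or few solutions. I would control both issues by exploiting the symmetry of the representations: every candidate solution of the typical Hamming weight $\simeq m/2$ admits the same number of valid representations, so under the standard heuristic that the relevant partial sums behave like independent uniform residues (the heuristic underlying~\cite{HJ10,BCJ11,BBSS20}), conditioning on a successful run makes all such solutions equally likely. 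The exceptional $\varepsilon$-fraction then collects the $v$ whose solutions have atypical weight, the $v$ whose sampled representation space misses every solution, and the tail of $v$ with anomalously many solutions; a second-moment estimate on the number of solutions at density one (Poisson-like with mean $1$) bounds this fraction, which a constant number of independent repetitions with fresh tapes drives below any prescribed $\varepsilon$ at no asymptotic cost. Turning these heuristics into a proof --- certifying that no $v$-dependent garbage survives the uncomputation and that the induced distribution is genuinely uniform rather than merely supported on all solutions --- is the real work, and it is what~\autoref{sec:quantum-subsum} carries out in detail.
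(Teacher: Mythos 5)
Your proposal is essentially the paper's own construction in \autoref{sec:quantum-subsum}: classically precompute the stored, $v$-independent lists of the unbalanced HGJ-style merging tree (in QRACM, or in plain classical memory for the low-memory variant), run the Grover-search exploration of the left branch as a fixed reversible circuit coherently over $v$, and accept that uniformity of the returned solution and the $(1-\varepsilon)$-fraction of good targets hold only under the standard subset-sum heuristic, improved by independent repetitions with fresh randomness. The only deviations are minor: the paper enforces $v$-independence of the stored lists by \emph{homogenizing} the instance --- appending $-v$ to $\kv$ and forcing a $1$ in that coordinate only in the leftmost leaf --- rather than by arranging the intermediate residues as you do (the two devices are interchangeable), and your claim that a \emph{constant} number of fresh-tape repetitions suffices is off for the weight assumption, since a given $v$ has its solution re-randomized to weight exactly $\ceil{m/2}$ only with probability $p_m = 1/\poly(m)$ per repetition, so $\poly(m)\log(1/\varepsilon)$ repetitions are needed --- a polynomial factor that the $\OOt{2^{c_{q\SS}n}}$ bound absorbs in any case.
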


Notice that in the way we implement the solver, we can only guarantee that it succeeds on a large proportion of inputs (there remains some probability of error). However, it depends on some precomputations that we can redo, to obtain a heuristically independent solver which allows to reduce $\varepsilon$ and / or to ensure that we get more solutions.

Though we could implement the function $\QSS$ by running an available classical (or quantum) subset-sum algorithm, it would then require exponential amounts of qubits. Using only $\poly(n)$ qubits, we know for sure that $c_{q\SS} \leq 0.5$, because we can use Grover's algorithm to exhaustively search for a solution $\vec{b}$. This search uses $\poly(n)$ qubits only. In~\autoref{sec:quantum-subsum}, we will show that we can reach smaller values for $c_{q\SS}$, which differ depending on whether we allow QRACM or not.

\section{Reducing $\DCP$ to a Subset-sum Problem}\label{sec:dcp}

Recall that we note $n = \ceil{\log_2 N}$, where $N$ is not necessarily a power of 2. We will focus in this section on two algorithms to solve the $\DCP$: the first one (from Regev~\cite{Reg04}) uses a classical subset-sum solver and the other (ours) uses a quantum one. 

\subsection{Using a Classical Subset-sum Solver}

By reducing Regev's algorithm to a single level, as described in \cite{BS18}, we can directly produce $\lsb(s)$ from $n$ phase vectors. This is detailed in~\autoref{algo:ClSS}.

\begin{algorithm}[tb]
	\caption{Finding $\lsb(s)$ using a classical subset-sum solver $\CSS$} \label{algo:ClSS}
	\begin{algorithmic}[1]
		\Require $\ket{\psi_{k_1}}$, $\dots$, $\ket{\psi_{k_n}}$ with $\kv \eqdef (k_1 \dots k_n) \in \Z_N^n$.
		\Ensure $\lsb(s)$.
		\State Tensor the phase vectors and append a register on $\F_2^{n-1}$
			\[ \bigotimes_{i=1}^{n} \ket{\psi_{k_i}} = \frac{1}{\sqrt{2^n}} \sum_{\bv \in \F_2^n} \omega_N^{s \innerprod{b}{k}} \ket{\bv} \]
		\State Compute the inner product of $\bv$ and $\kv$ in the ancillary register 
			\[ \frac{1}{\sqrt{2^n}} \sum_{\bv \in \F_2^n} \omega_N^{s \innerprod{b}{k}} \ket{\bv} \ket{\innerprod{b}{k} \mod 2^{n-1}} \]
		\State Measure the ancillary register  \Comment{$Z$ is a normalizing constant}
			\[ \frac{1}{\sqrt{Z}} \sum_{\substack{\bv \in \F_2^n: \\ \innerprod{b}{k} = z \mod 2^{n-1}}} \omega_N^{s \innerprod{b}{k}} \ket{\bv} \ket{z} \]
		\State Search for vectors $\vec{b_i}$ such that $\innerprod{b_i}{k} = z \mod 2^{n-1}$ using $\CSS$
		\State Project the superposition onto a pair of solutions, $\eg$, $(\bv_{1},\bv_{2})$ 
			\[ \frac{1}{\sqrt{2}} \left(\omega_N^{s \innerprod{b_1}{k}} \ket{\bv_{1}} + \omega_N^{s \innerprod{b_2}{k}} \ket{\bv_{2}} \right) \]
		\State Relabel the basis states to $(\ket{0},\ket{1})$, resulting in 
			\[ \frac{\omega_N^{s \innerprod{b_1}{k}}}{\sqrt{2}} \left(\ket{0} + \omega_N^{s \innerprod{b_2-b_1}{k}} \ket{1}\right) \]
		\State Apply a Hadamard gate on the qubit, measure it and output the result.
	\end{algorithmic}
\end{algorithm}

It can be proven that in Step 4, the number of solutions is quite small but generally enough for our purpose. In Step 5, the solution vectors we want to project our superposition on are marked in an ancillary register which is then measured. Either we will get what we want, or we will end up with a superposition of the solution vectors that were not marked, in which case we start the process again with two other solution vectors. For more details, we refer to the extensive study of Regev's algorithm by Childs, Jao and Soukharev \cite{CJS14}.

The following lemma gives us the complexity of~\autoref{algo:ClSS}, derived from Regev's algorithm.

\begin{lemma}[Subsection 3.3 \cite{BS18}]\label{lem:ClSS} 
	There exists an algorithm which finds $\lsb(s)$ with $\OO{n}$ queries and quantum time and space. It has the same usage in classical time and space as the subset-sum solver $\CSS$.
\end{lemma}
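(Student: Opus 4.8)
The statement is essentially a complexity accounting for \autoref{algo:ClSS}, so the plan is to bound each resource in turn and to import the correctness of the output from the established analysis of Regev's algorithm.

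\textbf{Queries.} First I would observe that the only oracle calls occur when preparing the $n$ phase vectors tensored in Step 1. As recalled in \autoref{sec:KupI}, each $\ket{\psi_{k_i}}$ is obtained from a single coset state by applying $QFT_N$ on the first register and measuring it, so producing the input to Step 1 costs exactly $n$ queries; Steps 2--7 make no further query. This gives query complexity $\OO{n}$.

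\textbf{Quantum time and space.} Next I would count qubits and $n$-bit operations. The tensored register lives on $\F_2^n$ and the appended register on $\F_2^{n-1}$, so the whole computation uses $\OO{n}$ qubits. For the time, Step 1 costs $n$ QFTs, Step 2 computes $\innerprod{b}{k} \bmod 2^{n-1}$ with $\OO{n}$ controlled $n$-bit additions, and Steps 3, 6, 7 are a measurement, a relabelling and a Hadamard. The only delicate step is the projection of Step 5: here I would invoke the density-one observation below, by which the solution set returned by $\CSS$ has $\OO{1}$ elements on average, so the marking of a chosen pair $(\bv_1,\bv_2)$ reduces to $\OO{1}$ equality tests on $n$ bits, requiring neither QRACM nor superposed access to the list. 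Summing up, the quantum time is $\OO{n}$ in the $n$-bit operation metric.

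\textbf{Classical time and space.} These are dominated entirely by the single call to $\CSS$ in Step 4 on the instance $(z,\kv)$ with modulus $2^{n-1}$ and $m=n$ variables. Since $2^n/2^{n-1} = 2$, this instance is of density one (two preimages on average), exactly the regime of \autoref{fact:ClSS}, while all other steps use only $\poly(n)$ classical work. Hence the classical time and space coincide with those of $\CSS$.

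\textbf{Correctness and the main obstacle.} The resource bounds above are routine; the substantive point, which I would import from the detailed study of Regev's algorithm in~\cite{CJS14} (and its single-level reduction in~\cite{BS18}), is that the measured bit equals $\lsb(s)$ with good probability. The mechanism is that the projected solutions satisfy $\innerprod{b_2-b_1}{k} \equiv 0 \pmod{2^{n-1}}$, so the label of the resulting phase vector is a multiple of $2^{n-1}$; when this label equals $N/2$ the state is $\tfrac{1}{\sqrt 2}\left(\ket{0} + (-1)^{\lsb(s)}\ket{1}\right)$ and the final Hadamard plus measurement reveals $\lsb(s)$. The obstacle is precisely to control the two failure modes noted after the algorithm, namely that the label may instead be $0$ (no information) and that Step 4 may return more than two solutions. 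Both are handled by the repetition argument of~\cite{CJS14}: the solution count is small but nonzero with constant probability, and re-projecting onto a fresh pair after an unfavourable measurement succeeds in $\OO{1}$ trials in expectation, which leaves the asymptotic costs unchanged.
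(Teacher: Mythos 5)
Your proposal is correct and takes essentially the same route as the paper: the paper gives no formal proof of this lemma, but instead presents \autoref{algo:ClSS}, notes informally that the solution count in Step~4 is small but sufficient and that a failed projection in Step~5 is retried on a fresh pair, and defers the probabilistic details to \cite{BS18,CJS14} — exactly as you do. Your explicit resource accounting ($n$ coset states for the $n$ phase vectors, $\OO{n}$ qubits and $n$-bit operations, one density-one call to $\CSS$ dominating the classical cost) and your description of the mechanism (pair difference a multiple of $2^{n-1}$, with the label-$0$ and many-solution failure modes handled by repetition) match the paper's intended reading of the algorithm.
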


\autoref{algo:ClSS} finds one bit of the secret. In order to retrieve the whole secret, we will have to repeat this procedure $n$ times. Thus, we get an algorithm using a quadratic number of calls to the oracle, exponential classical time and space because of the subset-sum solver, linear quantum space and quadratic quantum time.

It turns out that we could solve the classical subset-sum problem on the side with a quantum computer, leading to some tradeoffs described in \cite{Bon19}. But we show hereafter that we can also build an algorithm which directly uses a quantum subset-sum solver instead of having to measure the ancillary register to get a classical instance of a subset-sum problem.

\subsection{Using a Quantum Subset-sum Solver}

The main observation that led to the design of the algorithm we introduce hereafter is that on one hand, we would like to build the superposition
\begin{equation}\label{eq:goal}
	\frac{1}{\sqrt{N}} \sumZN{j} \omega_N^{sj} \ket{j}
\end{equation}
since applying the inverse QFT on $\Z_N$ on it would directly give the secret $s$, and on the other hand, we know that it would be possible, thanks to a quantum subset-sum solver, to prepare the state 
\begin{equation}\label{eq:Regev}
	\frac{1}{\sqrt{Z(\kv)}} \sumFm{b} \omega_N^{s \innerprod{b}{k}} \ket{\innerprod{b}{k} \mod N}
\end{equation}
where $Z(\kv)$ is a normalizing constant depending on $\kv$. Indeed, preparing this state is done by using Regev's trick (see \cite{Reg02,SSTX09}), $\ie$,
\begin{itemize} 
	\item[$(i)$] by tensoring $m$ phase vectors \[ \frac{1}{\sqrt{M}} \sumFm{b} \omega_N^{s \innerprod{b}{k}} \ket{\bv} \ket{\zero{n}}, \]
	\item[$(ii)$] then computing the subset-sum in the second register to get the entangled state \[ \frac{1}{\sqrt{M}} \sumFm{b} \omega_N^{s \innerprod{b}{k}} \ket{\bv} \ket{\innerprod{b}{k} \mod N}, \]
	\item[$(iii)$] and finally disentangle it thanks to a quantum subset-sum algorithm which from $\innerprod{b}{k} \mod N$ and $\kv$ (which is classical) recovers $\bv$ and subtracts it from the first register to get the state we want.
\end{itemize}
As one can see, if we could take $m = n$ and have an isomorphism between the vectors $\bv$ and the knapsack sums $\innerprod{b}{k} \mod N$, the prepared state (\ref{eq:Regev}) would be exactly the superposition (\ref{eq:goal}).

However, there would be many cases in which multiple solutions to the subset-sum problem exist. Thus we take $m < n$ and define $M \eqdef 2^m < N$. This is different from~\autoref{algo:ClSS}, where such collisions are needed. We obtain~\autoref{algo:ideal}, which uses Regev's trick with a quantum subset-sum solver in Step 3.

\begin{algorithm}[tb]
	\caption{Ideal algorithm}\label{algo:ideal}
	\begin{algorithmic}[1]
		\Require A parameter $m < n$ and phase vectors $\ket{\psi_{k_i}}$ for $i \in [\![1,m]\!]$.
		\Ensure An element $j \in \Z_N$.
		\State Tensor the $m$ phase vectors and append a register on $\Z_N$ 
			\[ \bigotimes_{i=1}^{m} \ket{\psi_{k_i}} \ket{\zero{n}} = \frac{1}{\sqrt{M}} \sumFm{b} \omega_N^{s \innerprod{b}{k}} \ket{\bv} \ket{\zero{n}} \]
		\State Compute the inner product of $\bv$ and $\kv$ in the ancillary register 
			\[ \frac{1}{\sqrt{M}} \sumFm{b} \omega_N^{s \innerprod{b}{k}} \ket{\bv} \ket{\innerprod{b}{k} \mod N} \]
		\State Uncompute $\bv$ thanks to $\kv$ and $\ket{\innerprod{b}{k} \mod N}$
			\[ \frac{1}{\sqrt{Z(\kv)}} \sumFm{b} \omega_N^{s \innerprod{b}{k}} \ket{\zero{m}} \ket{\innerprod{b}{k} \mod N} \]
		\State Apply the inverse QFT on $\Z_N$ on the second register 
			\[ \frac{1}{\sqrt{N}} \sumZN{j} \left( \frac{1}{\sqrt{Z(\kv)}} \sumFm{b} \omega_N^{(s-j) \innerprod{b}{k}} \right) \ket{\zero{m}} \ket{j} \]
		\State Measure the state and output the resulting $j$.
	\end{algorithmic}
\end{algorithm}

Despite $M$ being smaller than $N$, some cases still yield multiple solutions, and furthermore the subset-sum solver (as given by Fact \ref{fact:QuSS}) fails on some instances. This is why we distinguish between~\autoref{algo:ideal} in which we consider the quantum subset-sum solver to be ideal ($\ie$, it finds back $\bv$ from $\innerprod{b}{k}$ and $\kv$ with certainty), and the algorithm that we actually build in practice:~\autoref{algo:QuSS}.

The analysis of~\autoref{algo:QuSS} is related to the set of $\bv$ on which the quantum subset-sum solver succeeds: $\QSS(\innerprod{b}{k}) = \bv$ for a fixed $\kv$.

\begin{notation}
	Let us denote by $\Gc(\kv)$ the set of $\bv$'s that are correctly found back by $\QSS$ for a given $\kv$: \[ \Gc(\kv) \eqdef \{\bv \in \F_2^{m}: \QSS(\innerprod{b}{k}) = \bv\} \] and let $G(\kv)$ be the size of the set $\Gc(\kv)$.
\end{notation}

\begin{algorithm}[tb]
	\caption{Finding $s$ using a quantum subset-sum solver $\QSS$}\label{algo:QuSS}
	\begin{algorithmic}[1]
		\Require A parameter $m < n$ and phase vectors $\ket{\psi_{k_{i}}}$ for $i \in [\![1,m]\!]$.
		\Ensure An element $j \in \Z_N$.
		\State Tensor the phase vectors and append a register on $\Z_N$ 
			\[ \bigotimes_{i=1}^{m} \ket{\psi_{k_{i}}} \ket{\zero{n}} = \frac{1}{\sqrt{M}} \sumFm{b} \omega_N^{s \innerprod{b}{k}} \ket{\bv} \ket{\zero{n}} \]
		\State Compute the inner product of $\bv$ and $\kv$ in the ancillary register 
			\[ \frac{1}{\sqrt{M}} \sumFm{b} \omega_N^{s \innerprod{b}{k}} \ket{\bv} \ket{\innerprod{b}{k} \mod N} \]
		\State Apply $\QSS$ to uncompute $\bv$ 
			\[ \frac{1}{\sqrt{M}} \sumFm{b} \omega_N^{s \innerprod{b}{k}} \ket{\bv \oplus \QSS(\innerprod{b}{k})} \ket{\innerprod{b}{k} \mod N} \]
		\State Measure the first register. If the result is not $\zero{m}$, abort and restart with new coset states. Otherwise, we obtain 
			\[ \frac{1}{\sqrt{G(\kv)}} \sum_{\bv \in \Gc} \omega_N^{s \innerprod{b}{k}} \ket{\zero{m}} \ket{\innerprod{b}{k} \mod N} \]
		\State Apply the inverse QFT on $\Z_{N}$ on the second register 
			\[ \frac{1}{\sqrt{N}} \sumZN{j} \left( \frac{1}{\sqrt{G(\kv)}} \sum_{\bv \in \Gc} \omega_N^{(s-j) \innerprod{b}{k}} \right) \ket{\zero{m}} \ket{j} \]
		\State Measure the state and output the resulting $j$.
	\end{algorithmic}
\end{algorithm}

We apply in Step 4 a measurement in order to disentangle the superposition we have, so we can apply an inverse QFT in the same natural way as in the ideal algorithm. We show that the probability of success of the measurement ($\ie$, of measuring $0$) is good enough for our purpose when taking $m$ close to $n$. We also prove under the same assumption that the algorithm outputs the secret with good probability. All in all, these two properties lead to our main result.

\begin{theorem}\label{th:QuSS}
	There exists an algorithm which finds $s$ using $\OO{n}$ queries and the same usage in time and space as the subset-sum solver $\QSS$.
\end{theorem}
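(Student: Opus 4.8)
The plan is to prove the theorem by establishing the correctness and complexity of \autoref{algo:QuSS}. Correctness reduces to two probabilistic claims: that the measurement in Step 4 returns $\zero{m}$ with probability bounded below by a constant, and that, conditioned on this event, the final measurement in Step 6 returns $j = s$ with probability bounded below by a constant. Granting both, a constant expected number of repetitions of the whole procedure suffices, each consuming $m < n$ fresh phase vectors (hence $\OO{n}$ queries in total), while every step other than the call to $\QSS$ costs only $\poly(n)$. This yields the claimed $\OO{n}$ queries and the same time and space usage as $\QSS$ up to polynomial factors.

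First I would analyze Step 4. Reading off the state produced by Step 3, measuring $\zero{m}$ in the first register precisely selects the $\bv$ with $\bv = \QSS(\innerprod{b}{k})$, \ie{} $\bv \in \Gc(\kv)$. Distinct elements of $\Gc(\kv)$ necessarily carry distinct labels $\innerprod{b}{k} \bmod N$, since two equal labels $v$ would force $\QSS(v)$ to equal two different preimages; the surviving basis states are therefore orthogonal and the success probability is exactly $G(\kv)/M$. I would then lower-bound $G(\kv)$: letting $D(\kv)$ denote the number of distinct values taken by $\bv \mapsto \innerprod{b}{k} \bmod N$, each such value admits a solution, so by \autoref{fact:QuSS} the solver returns a correct (hence unique) preimage for at least a $(1-\varepsilon)$ proportion of them, giving $G(\kv) \geq (1-\varepsilon) D(\kv)$.

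Next I would control $D(\kv)$ over the random labels $\kv$: for $M = 2^m$ sums landing in $\Z_N$, the expected number of distinct values is $N\left(1 - (1 - 1/N)^M\right) \approx N\left(1 - e^{-M/N}\right)$. This is always a constant fraction of $M$, so the Step 4 probability $G(\kv)/M$ is $\Om{1}$ irrespective of $m$; but it is a constant fraction of $N$ only when $M/N$ is bounded away from $0$, which is exactly why $m$ must be taken within $\OO{1}$ bits of $n$. For Step 6 I would compute the probability of each outcome $j$ from the post-QFT amplitudes: the $j = s$ amplitude is a coherent sum of $G(\kv)$ equal phases, giving $P(s) = G(\kv)/N = \Om{1}$ under this choice of $m$, whereas for $j \neq s$ the amplitude is a sum of $G(\kv)$ incoherent phases of expected square magnitude $\OO{G(\kv)}$, so $P(j) = \OO{1/N}$ and $s$ is the overwhelmingly likely mode. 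A constant number of repetitions, together with a cheap verification of each candidate against a coset state, then recovers $s$ with high probability.

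The main obstacle is the control of $G(\kv)$. It requires simultaneously handling two independent sources of loss — the collision structure of the subset-sum map $\bv \mapsto \innerprod{b}{k}$, governing $D(\kv)$ and depending on the random $\kv$, and the $\varepsilon$-fraction of inputs on which $\QSS$ may fail, coming from \autoref{fact:QuSS} — and showing that their combined effect still leaves $G(\kv)$ a constant fraction of both $M$ and $N$ with high probability over $\kv$. Making the choice of $m$ close to $n$ precise (balancing the two constant probabilities against the query count) and arguing concentration of $D(\kv)$ around its expectation are the delicate points; the remaining amplitude computations and the complexity accounting are routine.
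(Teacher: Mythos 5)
Your proposal has the same skeleton as the paper's proof: the Step-4 measurement succeeds with probability exactly $G(\kv)/M$, the final measurement (given Step 4 succeeded) returns $s$ with probability exactly $G(\kv)/N$, and everything reduces to lower-bounding $G(\kv)$ on average over $\kv$ via a collision count, with $m$ taken within $\OO{1}$ bits of $n$. Your bound $G(\kv) \ge (1-\varepsilon)D(\kv)$, with $D(\kv)$ the number of distinct subset sums, is a legitimate (in fact slightly stronger) variant of the paper's $G(\kv) \ge (1-\varepsilon)C_1 \ge (1-\varepsilon)\left(2M - Z(\kv)\right)$ from Lemma~\ref{lem:boundG}.

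The genuine gap is your estimate of $\Esp[\kv]{D(\kv)}$. The formula $N\left(1-(1-1/N)^M\right)$ is the balls-into-bins count for $M$ \emph{independent} uniform throws, but the $2^m$ sums $\innerprod{b}{k} \bmod N$ are generated from only $m$ random words and are highly dependent: a single coincidence such as $k_i + k_j = k_{i'} + k_{j'}$ creates exponentially many collisions at once, so the joint distribution of these sums is nothing like independent uniform. What is true --- and is all you need --- is pairwise uniformity: for $\bv \ne \bv'$, the difference $\innerprod{b}{k} - \innerprod{b'}{k}$ has some coefficient equal to $\pm 1$, hence is uniform on $\Z_N$, so each pair collides with probability exactly $1/N$. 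Linearity of expectation then gives $\Esp[\kv]{D(\kv)} \ge M - \binom{M}{2}/N$, a constant fraction of $M$ (and of $N$ when $m = n-1$). This pairwise count is precisely the content of the paper's Lemma~\ref{lem:EspZ}, since $Z(\kv) = M + 2\,\#\{\text{colliding pairs}\}$; so your step is repairable, but the repair is essentially the paper's lemma rather than the formula you invoked. Two further points. The concentration of $D(\kv)$ that you flag as a delicate point is not needed: each repetition uses a fresh $\kv$, so the per-run success probability is $\Esp[\kv]{G(\kv)^2/(MN)} \ge \left(\Esp[\kv]{G(\kv)}\right)^2/(MN)$ by Jensen, and bounds in expectation suffice --- this is exactly how the paper concludes, with fewer than $8/(1-\varepsilon)^2$ expected repetitions at $m=n-1$. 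Likewise, your claim that every $j \ne s$ has probability $\OO{1/N}$ rests on yet another unproved randomness heuristic and is unnecessary: since a candidate $j$ can be verified, it is enough that the probability of outputting $s$ is $\Om{1}$.
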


In order to analyze~\autoref{algo:QuSS} and prove Theorem \ref{th:QuSS}, we will proceed in two steps. 

\subsubsection*{Step 1.}

The first step is to give a lower bound on $\Esp[\vec{k}]{G(\vec{k})}$. This lower bound is given by estimating the number of vectors which admit more than one possible solution.

To arrive here, we first take a look at the normalization constant $Z(\kv)$ and we compute $\Esp[\vec{k}]{Z(\vec{k})}$ (the average over all choices of $\vec{k}$). This can be done by simply looking at the measurement step in~\autoref{algo:ideal}.

\begin{lemma}\label{lem:EspZ}
	We have \[ \Esp[\kv]{Z(\kv)} = M \left(1+\frac{M-1}{N}\right). \]
\end{lemma}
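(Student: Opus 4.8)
The plan is to read off $Z(\kv)$ as a squared norm from Step~3 of \autoref{algo:ideal}, and then compute its expectation by a standard pair-counting (second moment) argument. The whole point will be that differences of binary vectors have entries $\pm 1$, which are units modulo \emph{any} $N$.

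First I would make $Z(\kv)$ explicit. For $v \in \Z_N$ set $S_v(\kv) \eqdef \{\bv \in \F_2^m : \innerprod{b}{k} \equiv v \pmod N\}$. In Step~3 the first register is forced to $\zero{m}$, so all the $\bv$ sharing a common value $v = \innerprod{b}{k} \bmod N$ get merged, and the unnormalized state equals $\ket{\zero m} \otimes \sum_{v \in \Z_N} |S_v(\kv)|\,\omega_N^{sv}\ket{v}$. Every phase $\omega_N^{sv}$ has modulus one, so its squared norm is $\sum_{v} |S_v(\kv)|^2$, independent of $s$; since the normalized state written in the algorithm is exactly this vector divided by $\sqrt{Z(\kv)}$, we must have
\[ Z(\kv) = \sum_{v \in \Z_N} |S_v(\kv)|^2 = \#\left\{ (\bv,\bv') \in \F_2^m \times \F_2^m : \innerprod{b}{k} \equiv \innerprod{b'}{k} \pmod N \right\}. \]

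Next I would take the expectation over $\kv$ uniform in $\Z_N^m$ and swap the finite sum with the expectation:
\[ \Esp[\kv]{Z(\kv)} = \sum_{\bv,\bv' \in \F_2^m} \mathbb{P}_{\kv}\!\left[ \innerprod{b-b'}{k} \equiv 0 \pmod N \right], \]
splitting the pairs according to whether $\dv \eqdef \bv - \bv'$ vanishes. The $M$ diagonal pairs ($\bv = \bv'$) each contribute probability $1$. For an off-diagonal pair, $\dv \ne \vec 0$ has entries in $\{-1,0,1\}$, and any nonzero entry is $\pm 1$, hence invertible modulo every $N$ precisely because $\bv,\bv'$ are binary.

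The key step, and the only one needing care, is the resulting uniformity: picking a coordinate $i$ with $d_i = \pm 1$ and conditioning on all other coordinates of $\kv$, the map $k_i \mapsto \innerprod{d}{k} = d_i k_i + \text{const}$ is an affine bijection of $\Z_N$, so $\innerprod{d}{k} \bmod N$ is uniform and equals $0$ with probability exactly $1/N$. Thus each of the $M^2 - M$ off-diagonal pairs contributes $1/N$, and collecting both contributions gives
\[ \Esp[\kv]{Z(\kv)} = M + (M^2 - M)\tfrac1N = M\left(1 + \frac{M-1}{N}\right), \]
as claimed. I would emphasize that this argument holds for arbitrary $N$ (not only primes), which is exactly what the invertibility of the $\pm 1$ entries of binary differences guarantees.
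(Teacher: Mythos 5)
Your proof is correct, and it takes a genuinely different route from the paper's. The paper works on the Fourier side: it computes the outcome distribution of \autoref{algo:ideal}, namely $\ProbId{j|\kv} = \frac{M^2}{N Z(\kv)} \prod_{i=1}^{m} \cos^2\left(\pi k_i \frac{s-j}{N}\right)$, via the factorization $\sumFm{b} \omega_N^{(s-j)\innerprod{b}{k}} = \prod_{i=1}^m \bigl(1 + \omega_N^{(s-j)k_i}\bigr)$, then solves for $Z(\kv)$ from the normalization $\sumZN{j} \ProbId{j|\kv} = 1$, and finally uses independence of the $k_i$ together with $\Esp{\cos^2\left(\pi k_i \frac{s-j}{N}\right)} = \frac{1}{2}$ for $j \neq s$. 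You work instead on the combinatorial side: you identify $Z(\kv)$ as the number of ordered pairs $(\bv,\bv')$ with colliding knapsack sums and compute its expectation by pair counting, the crux being that a nonzero difference $\bv - \bv'$ has a $\pm 1$ coordinate, hence invertible modulo any $N$, so each of the $M^2 - M$ off-diagonal pairs collides with probability exactly $1/N$. The two computations are dual evaluations of the same second moment and are equally general --- the paper's cosine average is also exactly $\frac{1}{2}$ for every modulus, so your emphasis on arbitrary $N$ is not actually a point of difference. What your route buys is that it is more elementary (no trigonometric identities) and that it establishes along the way the collision-count characterization of $Z(\kv)$, which the paper must re-derive from scratch at the start of the proof of Lemma~\ref{lem:boundG}; what the paper's route buys is the explicit product-of-cosines formula for the measurement distribution of the ideal algorithm, an expression of independent interest for analyzing the final measurement step.
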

\begin{proof}
Fix $\kv =(k_1,\cdots,k_m)$. For all $j \in \Z_N$, the measurement in~\autoref{algo:ideal} returns $j$ with probability:
\begin{align*}
	\ProbId{j|\kv} & = \frac{1}{NZ(\kv)} \abs{\sumFm{b} \omega_{N}^{(s-j) \innerprod{b}{k}}}^2 \\
	& = \frac{1}{NZ(\kv)} \abs{\prod_{i=1}^{m} \left(1 + \omega_{N}^{(s-j) k_i} \right)}^2 	& =& \frac{1}{NZ(\kv)} \prod_{i=1}^{m} \abs{1 + \omega_{N}^{(s-j) k_i}}^2 \\
	& = \frac{1}{NZ(\kv)} \prod_{i=1}^{m} 4\cos^2{\left(\pi k_i \frac{s-j}{N} \right)}& =& \frac{M^2}{NZ(\kv)} \prod_{i=1}^{m} \cos^2{\left(\pi k_i \frac{s-j}{N} \right)} \enspace.
\end{align*}
Furthermore, we have $\sumZN{j} \ProbId{j|\kv} = 1$, so we can write:
\begin{equation}
Z(\kv) = \frac{M^2}{N} \sumZN{j} \prod_{i=1}^{m} \cos^2{\left(\pi k_i \frac{s-j}{N} \right)}
\end{equation}
	It follows that \[ \Esp[\kv]{Z(\kv)} = \frac{M^2}{N} \sumZN{j} \Esp{\prod_{i=1}^{m} \cos^2{\left(\pi k_i \frac{s-j}{N} \right)}} \]
	and since the $k_i$ are i.i.d., we have
	\begin{align*}
		\Esp[\kv]{Z(\kv)} & = \frac{M^2}{N} \left(1 + \sum_{j \in \Z_N \setminus\{s\}} \prod_{i=1}^{m} \Esp{\cos^2{\left(\pi k_i \frac{s-j}{N} \right)}}\right) \\
		& = \frac{M^2}{N} \left(1 + (N-1) \prod_{i=1}^{m} \frac{1}{2} \right) = \frac{M}{N} (N+M-1) \enspace. \qquad \square
	\end{align*}
\end{proof}

Next, we give a relation between $G(\vec{k})$ and $Z(\vec{k})$.

\begin{lemma}\label{lem:boundG}
	For any $\vec{k}$: $$G(\kv) \ge (1-\varepsilon) \left( 2M -Z(\kv) \right).$$
\end{lemma}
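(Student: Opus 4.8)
The plan is to reduce the claimed inequality to an elementary count of collisions in the knapsack map $\bv \mapsto \innerprod{b}{k} \bmod N$, combined with the guarantee on $\QSS$ from Fact~\ref{fact:QuSS}. Throughout, $\kv$ is fixed, and I introduce the multiplicities $c_v \eqdef |\{\bv \in \F_2^m : \innerprod{b}{k} = v \bmod N\}|$ for $v \in \Z_N$, so that $\sum_{v \in \Z_N} c_v = M$.

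The first key identity I would establish is $Z(\kv) = \sum_{v \in \Z_N} c_v^2$. This follows from the closed form derived in the proof of Lemma~\ref{lem:EspZ}, namely $Z(\kv) = \frac{M^2}{N}\sum_{j \in \Z_N}\prod_{i=1}^m \cos^2\!\left(\pi k_i \frac{s-j}{N}\right)$, together with $\prod_{i=1}^m 4\cos^2(\pi k_i t/N) = \left|\sum_{\bv} \omega_N^{t\innerprod{b}{k}}\right|^2$. Reindexing $j \mapsto s-j$ and expanding the square yields $Z(\kv) = \frac{1}{N}\sum_{j}\sum_{\bv,\bv'}\omega_N^{j\innerprod{b-b'}{k}} = |\{(\bv,\bv') : \innerprod{b}{k}=\innerprod{b'}{k}\}| = \sum_v c_v^2$. (Equivalently, $Z(\kv)$ is just the squared norm of the second register $\sum_v \omega_N^{sv} c_v\ket{v}$ obtained after the ideal uncomputation in Step~3 of Algorithm~\ref{algo:ideal}.)

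Next I would pin down $G(\kv)$. Let $V \eqdef |\{v \in \Z_N : c_v \ge 1\}|$ be the number of distinct attained values, which is exactly the set of targets admitting a solution. For a fixed $v$, the solver returns a single vector $\QSS(v)$, so at most one $\bv$ with $\innerprod{b}{k}=v$ can satisfy $\QSS(\innerprod{b}{k})=\bv$, and exactly one does whenever $\QSS(v)$ is a genuine solution. By Fact~\ref{fact:QuSS}, $\QSS$ returns a genuine solution for a proportion at least $1-\varepsilon$ of the $V$ values admitting a solution, so summing the per-value contributions gives $G(\kv) \ge (1-\varepsilon)V$. It then remains to show $V \ge 2M - Z(\kv)$, which I would do by a term-by-term comparison: writing $2M - Z(\kv) = \sum_{v}(2c_v - c_v^2)$, the terms with $c_v = 0$ vanish, while for each $v$ with $c_v \ge 1$ the bound $2c_v - c_v^2 \le 1$ is exactly $(c_v-1)^2 \ge 0$. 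Hence $2M - Z(\kv) \le \sum_{v:\,c_v\ge 1} 1 = V$, and combining the two bounds yields $G(\kv) \ge (1-\varepsilon)(2M-Z(\kv))$.

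The routine parts are the two identities $M=\sum_v c_v$ and $Z(\kv)=\sum_v c_v^2$, and the convexity-type inequality $(c_v-1)^2\ge 0$. The step requiring the most care is the characterization of $\Gc(\kv)$: one must argue that each distinct value contributes at most one correctly inverted $\bv$, and, crucially, that the $\varepsilon$-fraction of failures in Fact~\ref{fact:QuSS} is measured over precisely the set of values admitting a solution (of which there are $V$), so that the failures cost at most $\varepsilon V$ and not something larger measured against $M$.
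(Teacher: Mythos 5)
Your proof is correct, and its core is the same as the paper's: identify $Z(\kv)$ with a collision count, and lower-bound the number of targets that $\QSS$ can correctly invert by $2M - Z(\kv)$ through elementary counting. The bookkeeping, however, differs in a way worth comparing. The paper works per vector: with $\Cc_i = \{\bv : \#\Bc(\innerprod{b}{k}) = i\}$ it proves $Z(\kv) = \sum_{i\ge1} i\,C_i$ (identical in content to your $\sum_v c_v^2$), rearranges to $Z(\kv) \ge C_1 + 2C_{>1}$, substitutes $C_1 = M - C_{>1}$ to get $C_1 \ge 2M - Z(\kv)$, and finishes with the ``trivial bound'' $G(\kv) \ge (1-\varepsilon)C_1$. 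You work per value: your intermediate quantity is $V = \abs{\{v : c_v \ge 1\}}$, bounded below by $2M - Z(\kv)$ via the pointwise inequality $2c_v - c_v^2 \le 1$. Since $V \ge C_1$, your chain $G(\kv) \ge (1-\varepsilon)V \ge (1-\varepsilon)(2M - Z(\kv))$ is marginally stronger, and, more importantly, its treatment of $\varepsilon$ is cleaner: Fact~\ref{fact:QuSS} measures the failure proportion precisely over the set of values admitting a solution, i.e., over $V$, so $G(\kv) \ge (1-\varepsilon)V$ is immediate, whereas the paper's $G(\kv) \ge (1-\varepsilon)C_1$, read literally, requires exactly the observation you make (at most $\varepsilon V$ values fail, then use $V \ge C_1$); a naive per-vector reading, ``each collision-free $\bv$ is recovered with probability $1-\varepsilon$,'' is not what Fact~\ref{fact:QuSS} asserts, since failures could concentrate on collision-free values. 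One small stylistic remark: deriving $Z(\kv) = \sum_v c_v^2$ from the cosine product of Lemma~\ref{lem:EspZ} is a detour; your parenthetical observation that $Z(\kv)$ is the squared norm of the unnormalized state $\sum_v c_v\,\omega_N^{sv}\ket{v}$ is the direct route, and is essentially how the paper computes it.
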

\begin{proof}
	Fix $\kv$. Let $\Bc(j)$ be the set of vectors whose knapsack sum is $j$: $$\Bc(j) \eqdef \{\bv \in \F_2^m \vert \; \innerprod{b}{k} = j \mod N\}$$ and let $\Cc_i$ be the set of vectors $\bv$ that have $i$ collisions: $$\Cc_i \eqdef \{\bv \in \F_2^{m} \vert \; \#\Bc(\innerprod{b}{k}) = i \}.$$ We denote by $C_i$ the size of the set $\Cc_i$.

	If we take a closer look at $Z(\kv)$, we have that
	\begin{align*}
		Z(\kv) & = \sumZN{j} \abs{\sum_{\bv \in B(j)} \omega_N^{s \innerprod{b}{k}}}^2 & = & \sumZN{j} \abs{\omega_N^{sj}}^2 \abs{\sum_{\bv \in B(j)} 1}^2\\
		& = \sumZN{j} \sum_{\bv \in B(j)} \sum_{\bv' \in B(j)} 1 & = & \sumZN{j} \sum_{\bv \in B(j)} \sum_{\bv' \in B(\innerprod{b}{k})} 1 \\
		& = \sumZN{j} \sum_{\bv \in B(j)} \#\Bc(\innerprod{b}{k}) & = & \sumFm{b} \#\Bc(\innerprod{b}{k}) \\
		& = \sum_{i \ge 1} \sum_{\bv \in \F_2^m \colon \#\Bc(\innerprod{b}{k}) = i} i & = & \sum_{i \ge 1} i C_i
	\end{align*}
	Letting $C_{> 1}$ be the number of vectors $\bv$ with at least one collision ($\ie$, for which there exists $\bv' \neq \bv$ such that they have the same knapsack sum), we have $C_{> 1} = \sum_{i > 1} C_i$. From
	\[ Z(\kv) = \sum_{i \ge 1} i C_i = C_1 + 2 \sum_{i \ge 2} C_i + \sum_{i \ge 3} (i-2) C_i \enspace, \]
	it follows that we have the lower bound:
	\[ Z(\kv) \ge C_1 + 2C_{> 1} \enspace. \]
	Injecting twice the equation $C_1 = M - C_{> 1}$ in this inequality and using the trivial bound $G(\kv) \ge (1-\varepsilon) C_1$, we conclude the proof. \qed
\end{proof}

From Lemma~\ref{lem:EspZ} and~\ref{lem:boundG}, we immediately deduce:

\begin{lemma}\label{lem:boundEspG}
	\[ \Esp[\vec{k}]{G(\vec{k})} \ge (1-\varepsilon) M \left(1-\frac{M-1}{N}\right). \]
\end{lemma}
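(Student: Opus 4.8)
The plan is straightforward: combine the two preceding lemmas by taking the expectation of the inequality from Lemma~\ref{lem:boundG} and substituting the exact value from Lemma~\ref{lem:EspZ}. Since both lemmas are already established, the proof reduces to one application of linearity of expectation followed by elementary algebra.

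First I would note that Lemma~\ref{lem:boundG} gives the pointwise bound $G(\kv) \ge (1-\varepsilon)\left(2M - Z(\kv)\right)$, which holds for \emph{every} choice of $\kv$. Taking the expectation $\Esp[\kv]{\cdot}$ of both sides and using that expectation preserves inequalities, I get
\[ \Esp[\kv]{G(\kv)} \ge (1-\varepsilon)\left(2M - \Esp[\kv]{Z(\kv)}\right), \]
where I have used linearity of expectation and the fact that $M$ and $\varepsilon$ are constants not depending on $\kv$.

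Next I would plug in the value $\Esp[\kv]{Z(\kv)} = M\left(1 + \frac{M-1}{N}\right)$ supplied by Lemma~\ref{lem:EspZ}. The computation is then purely mechanical:
\begin{align*}
	\Esp[\kv]{G(\kv)} & \ge (1-\varepsilon)\left(2M - M\left(1 + \frac{M-1}{N}\right)\right) \\
	& = (1-\varepsilon) M \left(2 - 1 - \frac{M-1}{N}\right) \\
	& = (1-\varepsilon) M \left(1 - \frac{M-1}{N}\right),
\end{align*}
which is exactly the claimed bound.

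There is essentially no obstacle here, since the statement is an immediate corollary of the two previous lemmas; the only point requiring a moment's care is confirming that the factor $(1-\varepsilon)$ and the constant $2M$ pass through the expectation unchanged (which they do, being independent of $\kv$), so that the sole quantity needing the averaging argument is $Z(\kv)$.
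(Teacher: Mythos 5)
Your proof is correct and matches the paper's approach exactly: the paper derives this lemma as an immediate consequence of Lemma~\ref{lem:EspZ} and Lemma~\ref{lem:boundG}, precisely by taking the expectation of the pointwise bound and substituting $\Esp[\kv]{Z(\kv)}$, which is the computation you carried out. Nothing further is needed.
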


\subsubsection*{Step 2.}
The second step in our proof computes the probability of success of the ``real'' algorithm by relating it to $\Esp[\vec{k}]{G(\vec{k})}$.

\begin{lemma}\label{lem:probaRe}
	\autoref{algo:QuSS} outputs the secret $s$ with probability $ \geq (1-\varepsilon) \frac{M(N-M+1)}{N^2}$.
\end{lemma}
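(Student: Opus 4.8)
The plan is to write $\ProbRe{\text{output } s}$ as a product of two measurement events — the disentangling measurement in Step 4 returning $\zero{m}$, and the final measurement in Step 5--6 returning $j=s$ — to evaluate each for a fixed $\kv$, and only then to average over $\kv$ and invoke \autoref{lem:boundEspG}. The single ingredient that makes everything clean is that $\QSS$ is a genuine function, so the map $\bv \mapsto \innerprod{b}{k} \bmod N$ is injective on $\Gc(\kv)$: if $\bv,\bv' \in \Gc(\kv)$ share a knapsack value $j$ then $\QSS(j)=\bv$ and $\QSS(j)=\bv'$, forcing $\bv=\bv'$. Hence the surviving basis states $\ket{\zero{m}}\ket{\innerprod{b}{k}}$ of Step 4 are pairwise orthogonal and interference-free, which immediately gives that Step 4 returns $\zero{m}$ with probability exactly $G(\kv)/M$ and that the post-measurement state is the normalised vector displayed in the algorithm.

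Next I would compute the success probability of the final measurement \emph{conditioned on} having obtained $\zero{m}$. Applying the inverse QFT to $\frac{1}{\sqrt{G(\kv)}} \sum_{\bv \in \Gc} \omega_N^{s \innerprod{b}{k}} \ket{\innerprod{b}{k}}$ and reading off the amplitude of $\ket{j}$ yields $\frac{1}{\sqrt{N G(\kv)}} \sum_{\bv \in \Gc} \omega_N^{(s-j)\innerprod{b}{k}}$; at $j=s$ every phase is trivial, so the $G(\kv)$ terms add constructively and
\[ \ProbRe{j = s \mid \zero{m}, \kv} = \frac{1}{N G(\kv)} \left| \sum_{\bv \in \Gc} 1 \right|^2 = \frac{G(\kv)}{N}. \]
It is precisely the injectivity noted above that guarantees this superposition has $G(\kv)$ distinct, equally weighted components, so that nothing cancels the constructive interference at $j=s$.

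Finally I would average over $\kv$, taking into account that a run only reaches the final measurement when Step 4 succeeds, an event of probability $G(\kv)/M$. The probability of outputting $s$ conditioned on not aborting is therefore the reweighted average
\[ \frac{\Esp[\kv]{\tfrac{G}{M}\cdot\tfrac{G}{N}}}{\Esp[\kv]{\tfrac{G}{M}}} = \frac{\Esp[\kv]{G(\kv)^2}}{N\,\Esp[\kv]{G(\kv)}} \ge \frac{\Esp[\kv]{G(\kv)}}{N}, \]
where the last inequality is Jensen's inequality $\Esp[\kv]{G(\kv)^2} \ge \Esp[\kv]{G(\kv)}^2$. Plugging in $\Esp[\kv]{G(\kv)} \ge (1-\varepsilon) M\left(1 - \frac{M-1}{N}\right)$ from \autoref{lem:boundEspG} and simplifying $M\left(1-\frac{M-1}{N}\right)=\frac{M(N-M+1)}{N}$ gives exactly $(1-\varepsilon)\frac{M(N-M+1)}{N^2}$.

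I expect the main obstacle to be the bookkeeping around the conditioning rather than any single computation: one must notice that the Step 4 post-selection biases $\kv$ towards larger $G(\kv)$, and argue that this bias can only help — which is what the Jensen step formalises and what makes $\Esp[\kv]{G(\kv)}/N$ a valid lower bound despite the reweighting. The injectivity of $\bv \mapsto \innerprod{b}{k}$ on $\Gc(\kv)$ is the quiet but indispensable ingredient: without it the post-Step-4 state would fail to be a clean uniform superposition and neither the $G(\kv)/M$ nor the $G(\kv)/N$ figure would survive.
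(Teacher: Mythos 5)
Your proposal is correct, and its core is the same as the paper's: compute the final-measurement amplitude at $j=s$ to get success probability $G(\kv)/N$ conditioned on $\kv$ (and on having passed Step 4), then average over $\kv$ and invoke \autoref{lem:boundEspG} to get $(1-\varepsilon)\frac{M(N-M+1)}{N^2}$. Where you genuinely differ is in the bookkeeping you flagged as the main obstacle. The paper's own proof simply writes $\ProbRe{s} = \Esp{\ProbRe{s|\kv}} = \Esp[\kv]{G(\kv)}/N$, i.e.\ it averages the conditional success probability against the \emph{uniform} distribution on $\kv$, silently ignoring that reaching the final measurement requires surviving the Step-4 post-selection, which reweights $\kv$ proportionally to $G(\kv)/M$. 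You make this conditioning explicit and show the reweighted average is $\Esp{G^2}/(N\,\Esp{G}) \ge \Esp{G}/N$ by Jensen, so the bias can only help; this closes a small rigor gap in the published argument at no cost to the bound. Your explicit injectivity observation (that $\bv \mapsto \innerprod{b}{k}$ is injective on $\Gc(\kv)$ because $\QSS$ is a function) is likewise left implicit in the paper, where it is needed both for the exact value $G(\kv)/M$ of the Step-4 success probability quoted in the proof of Theorem~1 and for the normalization $1/\sqrt{G(\kv)}$ of the post-measurement state. Both routes yield identical constants; yours is the more defensible derivation of the statement as written.
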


\begin{proof}
We compute the probability of measuring $j \in \Z_N$ at the end of~\autoref{algo:QuSS}. In particular, we have for $s$
\[ \ProbRe{s|\kv} = \frac{1}{N G(\vec{k})} \abs{\sum_{\bv \in \Gc} \omega_{N}^{0}}^2 = \frac{G(\vec{k})}{N} \]
We have by Lemma \ref{lem:boundEspG} that $\Esp{G(\vec{k})} \ge (1-\varepsilon) M \left(1-\frac{M-1}{N}\right)$. We finish the proof by observing that $\ProbRe{s}= \Esp{\ProbRe{s|\kv}} \geq (1-\varepsilon) \frac{M(N-M+1)}{N^2}$.\qed
\end{proof}

Finally, we can prove Theorem \ref{th:QuSS}.

\begin{proof}
	Step 4 of~\autoref{algo:QuSS} succeeds with average probability $\frac{\Esp{G(\vec{k})}}{M}$ which is greater than $(1-\varepsilon)\frac{N-M+1}{N}$ (by Lemma~\ref{lem:boundEspG}). The final measurement of the algorithm outputs the secret with probability $\geq (1-\varepsilon) \frac{M(N-M+1)}{N^2}$ (by Lemma~\ref{lem:probaRe}). We will thus have to repeat the algorithm an expected number smaller than $\frac{N^3}{(1-\varepsilon)^2 M (N-M+1)^2}$ times. By letting $m$ be equal to $n-1$, we obtain that the algorithm will have to be repeated less than $8/(1-\varepsilon)^2$ times. Thus, we can conclude that our algorithm needs $\OO{n}$ queries and has complexity costs identical to the ones of the subset-sum solver, since the subset-sum resolution is the only exponential step of the algorithm.\qed
\end{proof}

\section{Interpolation algorithm}\label{sec:interpolation}

If we take a look at the ideal algorithm and consider that there is no collision, we can see that we would like $2^m$ to be as close to $N$ as possible in order for the sum
\begin{equation*}
	\frac{1}{\sqrt{M}} \sumFm{b} \omega_N^{(s-j) \innerprod{b}{k}}
\end{equation*}
to contain as many as possible elements of the sum
\begin{equation*}
	\frac{1}{\sqrt{N}} \sum_{\bv \in \F_{2}^{n}} \omega_N^{(s-j) \innerprod{b}{k}}.
\end{equation*}
In the mean time, it is clear that the closest $M$ gets to $N$, the more likely collisions $\innerprod{b_1}{k} = \innerprod{b_2}{k}$ for $\bv_1 \neq \bv_2$ become. We thus have to find a compromise on the value $m$ or more interestingly play with the values $k_i$ used in the algorithm, to avoid  collisions  and to simplify the resolution of the subset-sum problem.

In fact, we can reduce the size of the subset-sum problem we have to solve by pre-processing the states to get values of $k_i$ that will allow us to solve the subset-sum problem on some bits  by Gaussian elimination. Constructing these $k_i$'s  can be achieved by Kuperberg's second algorithm (or any improvement). Given a threshold parameter $t \in [\![1,m]\!]$, we can consider the following configuration for the $k_i$ to use as inputs in~\autoref{algo:QuSS} (dots represent unknown bits and the $i$-th bit of the $j$-th row is the $j$-th bit of the binary expansion of $k_i$):

\begin{equation}\label{mat:config}
\bordermatrix{~			& 1		& 2			& \cdots& m-t		& m-t+1		& \cdots& n			\cr
              k_1		& 1		& \bullet	& \cdots& \bullet	& \bullet	& \cdots& \bullet	\cr
              k_2		& 0		& 1			& \ddots& \bullet	& \bullet	& \cdots& \bullet	\cr
              \vdots	& \vdots& \vdots	& \ddots& \ddots	& \vdots	& \cdots& \bullet	\cr
              k_{m-t}	& 0		& 0			& \cdots& 1			& \bullet	& \cdots& \bullet	\cr
              k_{m-t+1}	& 0		& 0			& \cdots& 0			& \bullet	& \cdots& \bullet	\cr
              \vdots	& \vdots& \vdots	& 		& \vdots	& \vdots	& \cdots& \bullet	\cr
              k_{m}		& 0		& 0			& \cdots& 0			& \bullet	& \cdots& \bullet	\cr}
\end{equation}

In~\autoref{algo:QuSS}, it turns out that we can keep a good probability of finding the secret $s$ by letting $m$ be equal to $n-1$ so that is what we will assume afterwards.

To build phase vectors that satisfy the configuration in \eqref{mat:config}, we will approximately have to query the oracle
\begin{equation*}
	\sum_{i=1}^{n-t} 2^{\sqrt{c_{\DCP} i}} + t 2^{\sqrt{c_{\DCP}(n-t)}}
\end{equation*}
leading to a query and time complexities of $\OO{(\sqrt{n-t} + t) 2^{\sqrt{c_{\DCP}(n-t)}}}$ (by Lemma \ref{lem:sum}), where $c_{\DCP}$ is the constant of the algorithm used to construct the states ($c_{\DCP} = 2$ for Kuperberg's second algorithm). For the subset-sum problem, solving it on the first $n-t$ bits is easy (thanks to a Gaussian elimination), the difficulty comes from the last $t$ bits, leading to a complexity in $\OO{2^{c_{q\SS} t}}$ time, where $c_{q\SS}$ is the complexity exponent of the quantum subset-sum solver. This parameter $t$ can be used in a natural way to obtain an interpolation algorithm, since it allows to obtain a tradeoff between the preparation of the states and the resolution of the problem (which amounts to solving a quantum subset-sum problem). 

We can now give an interpolation algorithm derived from~\autoref{algo:QuSS}. We note that letting $q$ be the query complexity exponent, it is possible to determine $t$ from $n$ and the value $q$ we can afford. Using Kuperberg's second algorithm (or any improvement) to compute suitable phase vectors as described before and then giving them as inputs to~\autoref{algo:QuSS}, we can retrieve the secret $s$ as described by~\autoref{algo:interpol} with the complexities given by~\autoref{th:interpol}.

\begin{algorithm}[htb]
	\caption{Interpolation algorithm (using a quantum $\SS$ solver)}\label{algo:interpol}
	\begin{algorithmic}[1]
		\Require $q$ such that $2^q$ is the number of queries we are allowed to do.
		\Ensure The secret $s$.
		\State Use Kuperberg's second algorithm (or any improvement) to create states $\ket{\psi_{k_{i}}}$ for $i \in [\![1,m]\!]$ satisfying the configuration represented by Matrix \eqref{mat:config}, where $t \approx \frac{q}{c_{\SS}}$.
		\State Apply~\autoref{algo:QuSS} on these $m$ states to obtain a value $j \in \Z_N$.
		\State Check if $j$ is the secret. If not, return to Step 1. Otherwise, output $j$.
	\end{algorithmic}
\end{algorithm}

\begin{restatable}{theorem}{thQumain} \label{th:interpol}
	Let $t \in [\![1, m]\!]$.~\autoref{algo:interpol} finds $s$ with $\OO{(\sqrt{n-t} + t) 2^{\sqrt{c_{\DCP}(n-t)}}}$ queries in $\OO{(\sqrt{n-t} + t) 2^{\sqrt{c_{\DCP}(n-t)}} + 2^{c_{q\SS}t}}$ quantum time, classical space $\OO{2^{\sqrt{c_{\DCP}(n-t)}} + 2^{c_{q\SS}t}}$ and $\OO{\poly(n)}$ quantum space.
\end{restatable}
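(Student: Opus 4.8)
The plan is to assemble the stated complexities by separately accounting for the two distinct phases of Algorithm~\ref{algo:interpol}: the state-preparation phase driven by Kuperberg's second algorithm, and the subset-sum resolution phase inside Algorithm~\ref{algo:QuSS}. The key observation, already spelled out in the discussion preceding the theorem, is that the configuration in Matrix~\eqref{mat:config} splits the work cleanly along the threshold $t$: the first $n-t$ bits of the $k_i$ are forced into a triangular (partially collimated) shape, so the corresponding part of the subset-sum instance is solved by Gaussian elimination for free, while only the last $t$ bits carry an honest quantum subset-sum instance of effective dimension $t$.

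First I would bound the cost of Step~1. To build the $m = n-1$ phase vectors matching the configuration, one needs, for each of the first $m-t$ rows, a vector collimated on a prescribed number of leading bits (ranging from $1$ up to $n-t$), and for each of the last $t$ rows, a vector collimated on exactly $n-t$ bits. Invoking the partial-collimation cost $2^{\sqrt{c_{\DCP} i}}$ for collimation on $i$ bits, the total query count is $\sum_{i=1}^{n-t} 2^{\sqrt{c_{\DCP} i}} + t\, 2^{\sqrt{c_{\DCP}(n-t)}}$. Applying Lemma~\ref{lem:sum} to the first sum gives $\OO{\sqrt{n-t}\, 2^{\sqrt{c_{\DCP}(n-t)}}}$, and combining with the second term yields the claimed $\OO{(\sqrt{n-t}+t)\, 2^{\sqrt{c_{\DCP}(n-t)}}}$ queries. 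Since each query is processed in (quantum) time proportional to the relabeling operations along the merging tree, the same expression bounds the quantum time of the preparation phase, and the classical memory is dominated by the largest stored list, namely $\OO{2^{\sqrt{c_{\DCP}(n-t)}}}$, while the quantum memory stays polynomial as in Kuperberg's sieve.

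Next I would bound Step~2. Feeding these structured states into Algorithm~\ref{algo:QuSS}, Theorem~\ref{th:QuSS} tells us the overall cost matches that of the subset-sum solver $\QSS$, except that here the instance has been reduced: the leading $n-t$ coordinates are triangular and eliminated classically, leaving a density-one instance on $t$ effective bits. By Fact~\ref{fact:QuSS} this costs $\OOt{2^{c_{q\SS} t}}$ quantum time and an analogous amount of classical space (for the precomputed tables). Adding the two phases term by term gives quantum time $\OO{(\sqrt{n-t}+t)\, 2^{\sqrt{c_{\DCP}(n-t)}} + 2^{c_{q\SS} t}}$ and classical space $\OO{2^{\sqrt{c_{\DCP}(n-t)}} + 2^{c_{q\SS} t}}$, with the query count governed solely by the preparation phase since Algorithm~\ref{algo:QuSS} itself performs no further queries. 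Finally, Theorem~\ref{th:QuSS} guarantees that Algorithm~\ref{algo:QuSS} succeeds after an expected constant number of repetitions when $m = n-1$, so the outer loop (Steps~1--3) repeats only $\OO{1}$ times and does not change the exponents; the quantum space remains $\OO{\poly(n)}$ throughout.

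The main obstacle I anticipate is justifying rigorously that the Gaussian-elimination shortcut genuinely reduces the quantum subset-sum instance to effective dimension $t$ without disturbing the success analysis of Algorithm~\ref{algo:QuSS}. One must check that forcing the $k_i$ into the triangular shape of~\eqref{mat:config} does not bias the distribution of $\vec{k}$ in a way that invalidates Lemmas~\ref{lem:EspZ}--\ref{lem:boundEspG} (which were proved for uniformly random $k_i$), since the collimation procedure produces $k_i$ with a prescribed, non-uniform leading structure. The cleanest route is to argue that, after eliminating the triangular block, the residual $t$-bit coordinates behave like a fresh density-one instance with effectively uniform labels, so that the expected-$G(\kv)$ and success-probability estimates transfer with $n$ replaced by $t$; this is the step that requires the most care, and is where I would concentrate the detailed argument.
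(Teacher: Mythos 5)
Your proposal follows essentially the same route as the paper's own justification: the paper likewise costs Step~1 via the partial-collimation sum $\sum_{i=1}^{n-t} 2^{\sqrt{c_{\DCP} i}} + t\,2^{\sqrt{c_{\DCP}(n-t)}}$ bounded by Lemma~\ref{lem:sum}, and costs Step~2 by observing that the triangular block of Matrix~\eqref{mat:config} is handled by Gaussian elimination so that only a $t$-bit quantum subset-sum instance remains, with Theorem~\ref{th:QuSS} (at $m=n-1$) giving the $\OO{1}$ expected repetitions. The distributional subtlety you flag at the end (that the structured $k_i$ are no longer uniform, so Lemmas~\ref{lem:EspZ}--\ref{lem:boundEspG} do not literally apply) is a genuine point, but the paper leaves it at the same informal level as you do.
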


We notice that when $t=m$, the $k_i$ are kept random and we have to solve the ``full rank'' subset-sum, matching with~\autoref{algo:QuSS}. On the other side, when $t = 1$, we fall back on Kuperberg's second algorithm since we have in this case to construct a collection of states divisible by all the successive powers of 2, see~\autoref{sec:KupII}. Finally, when $1 < t < m$, we have new algorithms working for any number of queries between $\OO{n}$ and $\OOt{2^{\sqrt{c_{\DCP}n}}}$.

\section{Quantum Subset-sum Algorithms}\label{sec:quantum-subsum}

In this section, we consider quantum algorithms solving the \emph{quantum} subset-sum problem introduced in~\autoref{subsection:subset-sum-prelim}. We give both asymptotic complexities and numerical estimates.

Recall that we consider a subset-sum instance $(v, \vec{k}), \vec{k} \in \Z_N^m$, where $v$ is \emph{in superposition}, and $\vec{k}$ will remain fixed. The problem is to find $\vec{b}$ such that $\innerprod{b}{k} = v \mod N$ for a given (fixed) modulus $N$. For a given $v$, if there are many solutions, we want to find one selected uniformly at random (under heuristics). If we want all solutions, then we can run multiple instances of the solver (we will have to redo the pre-computations that we define below). A given solver, defined for a specific $\vec{k}$, is expected to work only for some (large) proportion $1-\varepsilon$ of $v$. We can check whether the output is a solution or not and measure the obtained bit to collapse on the cases of success.

\subsection{Algorithms Based on Representations}

The best algorithms to solve the subset-sum problem with density one are list-merging algorithms using the \emph{representation technique}~\cite{HJ10,BCJ11}. The best asymptotic complexities (both classical and quantum) are given in~\cite{BBSS20}. We detail the representation framework following the depiction given in~\cite{BBSS20}. To ease the description, we start with the case $v = 0$, $\ie$, the \emph{homogeneous} case, and we will show below how to extend it easily to $v \neq 0$.

\paragraph{Guessed Weight.}
We assume that the solution $\vec{b}$ is of weight $\ceil{m/2}$. This is true only with probability: $p_m := 2^{-m} {m \choose \ceil{m/2}}= 1/ \mathsf{poly}(m)$. If not, we re-randomize the subset-sum instance by multiplying $\vec{b}$ by a random invertible matrix. Thus if we manage to solve an instance of weight $\ceil{m/2}$, the total complexity to solve any instance will introduce a multiplicative factor $\frac{1}{p_m}$ that we will have to estimate.

\paragraph{Distributions.}
We consider \emph{distributions} of vectors having certain relative weights: $D^m[\alpha] \subseteq \{0,1\}^m$ is the set of vectors having weight $\alpha m$. The basic idea of representations is to write the solution $\vec{b}$ as a sum of vectors of smaller relative weights, $\eg$, $\vec{b} = \vec{b}_1 + \vec{b}_2$ where $\vec{b}_1 \in D^m[\alpha_1], \vec{b}_2 \in D^m[\alpha_2]$ and $\alpha_1 + \alpha_2 = \frac{1}{2}$. In this paper, we consider only representations with coefficients 0 or 1. Extended representations can be considered, using more coefficients (which have to cancel out each other). However, the advantage of using extended representations becomes quickly insignificant in practice. It is also harder to compute the number of representations, or the filtering probabilities that we define below.

\paragraph{Merging Tree.}
A subset-sum algorithm is defined by a \emph{merging tree}. A node in this tree is a \emph{list} $L[\ell, \alpha, c]$, which represents a set of vectors drawn from $\{0,1\}^m$ under several conditions: 1. the size of the list is $2^{m \ell}$; 2. the vectors are sampled  u.a.r. from a prescribed distribution $D^m[\alpha]$; 3. the vectors satisfy a \emph{modular condition} of $c m$ bits. With $v = 0$, the following condition can be used: $\innerprod{e}{k} \mod N \in [-N/2^{cm} ; N/2^{cm}]$ for some number $c$. More generally, the modular conditions can be chosen arbitrarily, as long as they remain compatible with the target $v$.

Once the tree structure is chosen, its parameters are optimized under several constraints. First, the lists have a certain maximal size. A distribution $D^m[\alpha]$ has size ${m \choose \alpha m}$, which is asymptotically estimated as $\simeq 2^{h(\alpha)m}$ where $h(x) := - x \log_2 x - (1-x) \log_2 (1-x)$ is the Hamming entropy. This creates the constraint $\ell \leq h(\alpha) - c$. Second, we expect the root list to contain the solution of the problem, $\ie$, $\ell = 0$ (one element), $\alpha = \frac{1}{2}$ and $c = 1$. Finally, each non-leaf list $L$ has its parameters determined by its two children $L_1, L_2$. Indeed, it is obtained via the \emph{merging-filtering} operation which selects, among all pairs of vectors $(\vec{e}_1, \vec{e}_2) \in L_1 \times L_2$, the pairs such that: $\vec{e}_1 + \vec{e}_2$ satisfies the modular condition (merging) and satisfies the weight condition (filtering). The parameters are:
\begin{equation}
	\begin{cases}
		\alpha = \alpha_1 + \alpha_2 \text{ (increasing weights)} \\
		\ell = \ell_1 + \ell_2 - (c - \min(c_1, c_2)) - \mathsf{pf}(\alpha_1, \alpha_2)
	\end{cases}
\end{equation}

Here, $\mathsf{pf}$ is the probability that two vectors chosen u.a.r. in their respective distributions will not have colliding 1s.

\begin{lemma}[Lemma 1 in~\cite{BBSS20}]\label{lemma:pf}
	Let $\vec{e_1}, \vec{e_2}$ be drawn u.a.r. from $D^m[\alpha_1], D^m[\alpha_2]$ with $\alpha_1 + \alpha_2 \leq 1$. The probability that $\vec{e}_1 + \vec{e}_2 \in D^m[\alpha_1 + \alpha_2]$ is equal to:
	\[ \mathsf{PF}(\alpha_1, \alpha_2, m) :=  {m - \alpha_1 m \choose \alpha_2 m }  / {m \choose \alpha_2 m} \simeq 2^{m  \mathsf{pf}(\alpha_1, \alpha_2) } \]
	where $\mathsf{pf}(\alpha_1, \alpha_2) := h\left(\frac{1-\alpha_2}{\alpha_1}\right) \alpha_1 - h(\alpha_1) \enspace.$
\end{lemma}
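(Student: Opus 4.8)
The plan is to first establish the exact combinatorial value of $\mathsf{PF}(\alpha_1,\alpha_2,m)$ by a direct counting argument, and then pass to the asymptotic exponent via the standard binomial–entropy estimate. The key observation is that for two $0/1$ vectors $\vec{e}_1, \vec{e}_2$, the integer sum $\vec{e}_1 + \vec{e}_2$ lies in $\{0,1\}^m$ — and then necessarily in $D^m[\alpha_1+\alpha_2]$, since its weight is then exactly $\alpha_1 m + \alpha_2 m$ — \emph{if and only if} $\mathrm{supp}(\vec{e}_1) \cap \mathrm{supp}(\vec{e}_2) = \emptyset$, i.e. the two vectors have no colliding $1$s. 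Indeed, a shared $1$ would produce a coordinate equal to $2$ and push the sum out of $\{0,1\}^m$; conversely, disjoint supports give a $0/1$ vector whose weight is the sum of the two weights. Thus the event to analyze is purely that of disjoint supports.

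Second, I would compute the probability of this event by conditioning. By symmetry of the uniform distribution on $D^m[\alpha_1]$, I can fix one representative $\vec{e}_1$ of weight $\alpha_1 m$ and count the favorable $\vec{e}_2$: a vector $\vec{e}_2 \in D^m[\alpha_2]$ avoids $\mathrm{supp}(\vec{e}_1)$ exactly when its $\alpha_2 m$ ones are chosen among the $m - \alpha_1 m$ coordinates outside $\mathrm{supp}(\vec{e}_1)$. There are ${m - \alpha_1 m \choose \alpha_2 m}$ such vectors out of ${m \choose \alpha_2 m}$ total, and since this ratio does not depend on the particular $\vec{e}_1$ chosen, the conditioning is harmless. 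This yields the exact value $\mathsf{PF}(\alpha_1,\alpha_2,m) = {m-\alpha_1 m \choose \alpha_2 m} / {m \choose \alpha_2 m}$ as claimed.

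Finally, I would pass to the exponent. Disjointness being symmetric, the same probability equals ${m-\alpha_2 m \choose \alpha_1 m}/{m \choose \alpha_1 m}$, and I would apply to this form the standard estimate ${am \choose bm} \simeq 2^{a m\, h(b/a)}$ (valid for $0 \le b \le a$, following from Stirling's formula up to factors polynomial in $m$ that the symbol $\simeq$ absorbs). The numerator contributes an exponent $(1-\alpha_2)\, h\!\left(\frac{\alpha_1}{1-\alpha_2}\right) m$ and the denominator $h(\alpha_1)\, m$; here the hypothesis $\alpha_1 + \alpha_2 \le 1$ is exactly what guarantees $\frac{\alpha_1}{1-\alpha_2} \in [0,1]$, so that $h$ is evaluated on its natural domain and the approximation is legitimate. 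Subtracting the two exponents gives $\mathsf{PF}(\alpha_1,\alpha_2,m) \simeq 2^{m\,\mathsf{pf}(\alpha_1,\alpha_2)}$ with $\mathsf{pf}$ as defined in the statement.

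The argument is essentially routine; the only points requiring care are the equivalence ``$\vec{e}_1+\vec{e}_2 \in D^m[\alpha_1+\alpha_2] \Leftrightarrow$ disjoint supports,'' which must be checked in both directions, and the verification that the argument of the Hamming entropy stays in $[0,1]$ so that the entropy estimate applies — both of which are controlled precisely by the hypothesis $\alpha_1 + \alpha_2 \le 1$. I expect no genuine obstacle beyond careful bookkeeping of the entropy exponents.
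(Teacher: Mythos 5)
Your counting argument is correct, and since the paper itself does not prove this lemma (it is imported verbatim as Lemma~1 of~\cite{BBSS20}), there is no in-paper proof to compare against; your route --- the equivalence of the event with disjoint supports, counting admissible $\vec{e}_2$ for a fixed $\vec{e}_1$, then binomial--entropy asymptotics --- is exactly the canonical argument behind the cited result.

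There is, however, one substantive problem with your final step. The exponent your computation actually produces is
\[ \left(1-\alpha_2\right) h\!\left(\frac{\alpha_1}{1-\alpha_2}\right) - h(\alpha_1), \qquad \text{equivalently} \qquad \left(1-\alpha_1\right) h\!\left(\frac{\alpha_2}{1-\alpha_1}\right) - h(\alpha_2), \]
and this is \emph{not} the expression printed in the statement, which reads $h\!\left(\frac{1-\alpha_2}{\alpha_1}\right)\alpha_1 - h(\alpha_1)$: there the argument of $h$ is the reciprocal of yours, and the prefactor is $\alpha_1$ rather than $1-\alpha_2$. So your closing claim that subtracting the exponents ``gives $\mathsf{pf}$ as defined in the statement'' is literally false. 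What is true is that the printed formula is a mis-transcription (the formula in~\cite{BBSS20} is $(1-\alpha_1)\,h\!\left(\frac{\alpha_2}{1-\alpha_1}\right) - h(\alpha_2)$), and your derivation yields the corrected form. Your own domain observation settles which version is right: under $\alpha_1+\alpha_2 \le 1$ one has $\frac{1-\alpha_2}{\alpha_1} \ge 1$, so the printed expression evaluates $h$ outside $[0,1]$ and is not even well defined except in the degenerate case $\alpha_1+\alpha_2=1$, whereas your argument $\frac{\alpha_1}{1-\alpha_2}$ does lie in $[0,1]$. You should state explicitly that your exponent corrects a typo in the statement, rather than assert a match that does not hold; with that amendment the proof is complete.
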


\paragraph{Classical Computation of the Tree.}
To any correctly parameterized merging tree corresponds a classical subset-sum algorithm that runs as follows: it creates the leaf lists by sampling their distributions at random. It then builds the parent lists by \emph{merging-filtering} steps. The \emph{merging} operation is efficient, since elements can be ordered according to the modular condition to be satisfied.

\begin{lemma}[Lemma 2 in~\cite{BBSS20}]
	Let $L_1, L_2$ be two sorted lists stored in classical memory with random access. In $\log_2$, relatively to $m$, the parent list $L$ can be built in time: $\max( \min(\ell_1, \ell_2), \ell_1 + \ell_2 -(c- \min(c_1, c_2)))$ and in memory $\max(\ell_1, \ell_2, \ell)$.
\end{lemma}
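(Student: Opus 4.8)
The plan is to treat this as a standard sorted-list merge, tracking the exponents only (all sizes and times expressed as $\log_2$ divided by $m$, so that polynomial-in-$m$ and $\log$ factors are absorbed, and a sum of two exponentials $2^{ma}+2^{mb}$ collapses to its dominating term $2^{m\max(a,b)}$). Recall that an element $\vec{e}_i$ of $L_i$ carries the value $\langle \vec{e}_i,\vec{k}\rangle \bmod N$, which the modular condition of $L_i$ confines to a window of width $N/2^{c_i m}$, and that the two child windows are chosen compatibly with the target window of width $N/2^{cm}$ imposed on the parent. First I would sort $L_1,L_2$ by this value (they are already sorted by hypothesis) and merge by iterating over the \emph{smaller} of the two lists: for each $\vec{e}_1$ in it, the partners $\vec{e}_2$ for which $\langle \vec{e}_1+\vec{e}_2,\vec{k}\rangle \bmod N$ lands in the target window form a contiguous block of the other (sorted) list, located by one binary search. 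The enumeration cost over the whole merge is therefore $\min(\ell_1,\ell_2)$ for the outer loop (the per-step binary search being a $\poly(m)$ factor that is absorbed) plus the total number of output pairs; these give the first and second arguments of the claimed maximum, respectively, once the output size is established.

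The key quantitative step is to count the pairs surviving the \emph{modular} part of the merge, before filtering. I would argue that, because the parent's target window is chosen consistently with the children's windows, the $\min(c_1,c_2)\,m$ lowest-order bits of the modular condition on the sum are \emph{automatically} satisfied for every pair of $L_1\times L_2$: the value $\langle\vec{e}_1+\vec{e}_2,\vec{k}\rangle$ already lies, modulo the coarser of the two child windows, in the correct region. Hence only the remaining $(c-\min(c_1,c_2))\,m$ bits have to be enforced, and under the standard randomness heuristic each of the $2^{m(\ell_1+\ell_2)}$ pairs meets them with probability $2^{-(c-\min(c_1,c_2))m}$. This yields an expected output of $2^{m(\ell_1+\ell_2-(c-\min(c_1,c_2)))}$ pairs, which is exactly the second term; combined with the outer-loop cost $\min(\ell_1,\ell_2)$ and the collapse of sums to maxima in the exponent, the bound $\max\!\big(\min(\ell_1,\ell_2),\,\ell_1+\ell_2-(c-\min(c_1,c_2))\big)$ follows.

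For the filtering and the memory, I would note that the weight condition is tested on each enumerated modular survivor at $\poly(m)$ cost and can be applied on the fly, so it does not change the time exponent but only shrinks the stored list from $2^{m(\ell_1+\ell_2-(c-\min(c_1,c_2)))}$ down to the parent size $2^{m\ell}$ with $\ell=\ell_1+\ell_2-(c-\min(c_1,c_2))-\mathsf{pf}(\alpha_1,\alpha_2)$, as in the tree parametrisation. The memory then only has to hold the two input lists and the filtered output, giving $\max(\ell_1,\ell_2,\ell)$. The main obstacle I anticipate is the justification of the $\min(c_1,c_2)$ term: one must make precise the compatibility condition between the child windows and the parent window that leaves the low-order $\min(c_1,c_2)\,m$ bits free, and verify that the heuristic equidistribution of the values $\langle \vec{e},\vec{k}\rangle$ used to pass from an expected fraction to an actual count of matches is legitimate in this merge-filtering setting; the remainder is bookkeeping of exponents.
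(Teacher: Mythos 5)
Your reconstruction is correct and matches the standard argument behind this lemma, which the paper itself does not prove but simply imports from~\cite{BBSS20} (its only remark being that merging is efficient because the lists are sorted by the modular value): iterate over the smaller sorted list, binary-search the contiguous block of compatible partners in the other, charge the enumeration of the heuristically $2^{m(\ell_1+\ell_2-(c-\min(c_1,c_2)))}$ modular survivors, filter on the fly, and store only $L_1$, $L_2$ and the filtered output. Your two flagged caveats (nestedness/compatibility of the children's and parent's modular windows, and the equidistribution heuristic turning expected counts into actual ones) are exactly the assumptions under which~\cite{BBSS20} states the lemma, so nothing essential is missing.
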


\paragraph{Quantum Computation of the Tree.}
While the more advanced quantum subset-sum algorithms use quantum walks~\cite{BJLM13,HM18,BBSS20}, we want to focus here on algorithms using few qubits, which at the moment, rely only on quantum merging with Grover search. They replace the classical merging operation by the following.

\begin{lemma}[Lemma 4 in~\cite{BBSS20}]\label{lemma:quantum-merging}
	Let $L_2$ be a sorted list stored in QRACM. Assume given a unitary $U$ that produces, in time $t_{L_1}$, a uniform superposition of elements of $L_1$. Then there exists a unitary $U'$ that produces a uniform superposition of elements of $L$, in time $\OO{ \frac{t_{L_1}}{\sqrt{ \mathsf{pf}(\alpha_1, \alpha_2) }} \max( \sqrt{2^{cm} / |L_2|},1 ) }$.
\end{lemma}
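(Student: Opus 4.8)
The plan is to prove Lemma 4 (the quantum merging statement) by building $U'$ as a Grover search nested inside the superposition produced by $U$. The target state is a uniform superposition over the elements of the merged-filtered list $L$. I would construct it in two conceptual stages: first produce superpositions of pairs $(\vec{e}_1, \vec{e}_2) \in L_1 \times L_2$ whose sum satisfies the modular condition, and second, amplify the subset of those pairs that also survive filtering (no colliding 1s, so that $\vec{e}_1 + \vec{e}_2 \in D^m[\alpha_1 + \alpha_2]$).

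\textbf{Step 1: merging via QRACM lookup.} Since $L_2$ is sorted and stored in QRACM, for a fixed $\vec{e}_1$ the elements $\vec{e}_2 \in L_2$ satisfying the modular condition $\innerprod{e_1+e_2}{k} \equiv 0$ (within the prescribed interval) form a contiguous block of expected size $\max(|L_2| / 2^{cm}, 1)$. Starting from the uniform superposition of $L_1$ (produced by $U$ in time $t_{L_1}$), I would, for each $\vec{e}_1$ in superposition, use the $\mathsf{Access}$ operation to produce a uniform superposition over the matching block of $\vec{e}_2$'s. This yields a uniform superposition over all merged pairs in time roughly $t_{L_1}$ times the cost of forming the block-superposition; the QRACM access cost is $\OO{1}$ per lookup.

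\textbf{Step 2: filtering via Grover.} Among the merged pairs, only a fraction $\mathsf{PF}(\alpha_1,\alpha_2,m) \simeq 2^{m\,\mathsf{pf}(\alpha_1,\alpha_2)}$ survive the weight (non-collision) condition, by Lemma~\ref{lemma:pf}. I would apply amplitude amplification to boost the amplitude on these surviving pairs. The number of Grover iterations is the inverse square root of the success probability, i.e. $\OO{1/\sqrt{\mathsf{pf}(\alpha_1,\alpha_2)}}$ in the exponent notation of the statement (reading $\mathsf{pf}$ as the probability $2^{m\,\mathsf{pf}}$). The test predicate — checking whether $\vec{e}_1+\vec{e}_2$ lands in $D^m[\alpha_1+\alpha_2]$ — is efficient, so each iteration costs one invocation of the merging step from Step 1. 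Composing the two stages gives total time $\OO{\frac{t_{L_1}}{\sqrt{\mathsf{pf}(\alpha_1,\alpha_2)}} \max(\sqrt{2^{cm}/|L_2|}, 1)}$, where the $\max$ factor accounts for the Grover cost of locating a valid $\vec{e}_2$-block when the modular condition is more restrictive than the list density (the regime $2^{cm} > |L_2|$).

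\textbf{The main obstacle} I expect is ensuring the output is genuinely a \emph{uniform} superposition over $L$, rather than a superposition with amplitudes skewed by the varying block sizes in Step 1 or by the data-dependent success probabilities in Step 2. Because different $\vec{e}_1$ match different numbers of $\vec{e}_2$, the naive construction weights each $\vec{e}_1$ unequally; I would handle this by arguing the block sizes concentrate around their expectation (a heuristic already assumed throughout this line of work) so the deviation from uniformity is negligible, or by a standard amplitude-equalization step. The second delicate point is the correct accounting of the $\max(\sqrt{2^{cm}/|L_2|},1)$ factor: when $2^{cm} \le |L_2|$ the matching block is nonempty and found in $\OO{1}$, whereas when $2^{cm} > |L_2|$ one must Grover-search over $L_2$ (or over the structured QRACM) to find a rare match, contributing the square-root factor. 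Verifying that these two regimes combine into the single stated bound is the bookkeeping that requires care; the quantum-algorithmic content itself is a routine composition of QRACM-assisted merging and amplitude amplification.
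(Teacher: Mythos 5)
Your proposal is correct and follows essentially the same approach as the original proof: note that this paper does not prove the lemma at all, but imports it verbatim as Lemma~4 of~\cite{BBSS20}, and the proof there is exactly your two-stage composition --- QRACM-assisted merging on the sorted list $L_2$, wrapped in amplitude amplification to enforce the filtering condition, with the heuristic concentration of block sizes invoked to argue near-uniformity of the output superposition. One clarification on your second ``delicate point'': in the sparse regime $2^{cm} > |L_2|$, the factor $\sqrt{2^{cm}/|L_2|}$ does \emph{not} come from a Grover search inside $L_2$ (for a fixed $\vec{e}_1$ the sorted/QRACM lookup costs $\OO{1}$ whether or not a match exists); it comes from amplitude amplification over the output of $U$, i.e., over the choice of $\vec{e}_1 \in L_1$, since only a fraction about $|L_2|/2^{cm}$ of them have a nonempty matching block --- which is precisely why that factor multiplies $t_{L_1}$ in the stated bound, as your final formula (but not your verbal account) correctly reflects.
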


Since the goal is only to sample u.a.r. from the root list, only half of the lists in the tree need actually to be stored in QRACM. The others are sampled using the unitary operators given by Lemma~\ref{lemma:quantum-merging}. In short, the obtained subset-sum algorithm is a sequence of Grover searches which use existing lists stored in memory to sample elements in new lists with more constraints.

\paragraph{Heuristics.}
The standard subset-sum heuristic assumes that the elements of all lists in the tree (not only the leaf lists) behave as if they were uniformly sampled from the set of vectors of right weight, satisfying the modular condition. This heuristic ensures that the list sizes are very close to their average: for each $L$ obtained by merging and filtering $L_1[\ell_1, \alpha_1, c_1]$ and $L_2[\ell_2, \alpha_2, c_2]$, we have:
\[ |L| \simeq \frac{|L_1| |L_2|}{ 2^{m(c - \min(c_1,c_2))} \mathsf{PF}(\alpha_1, \alpha_2, m) } \enspace,  \]
where the approximation is exact down to a factor 2. This is true with overwhelming probability for all lists of large expected size via Chernoff-Hoeffding bounds, and even if the root list is of expected size 1, the probability that it actually ends up empty is smaller than $e^{-0.5} \simeq 0.61$.

\subsection{From Asymptotic to Exact Optimizations}

As the time and memory complexities of a subset-sum algorithm are determined by its merging tree, we seek to select a tree which minimizes these parameters. Given a certain subset-sum problem, we first select a tree shape. As an example, the best subset-sum algorithm with low qubits (using QRACM) is the ``quantum HGJ'' algorithm of~\cite{BBSS20}, whose structure is reproduced in~Figure~\ref{fig:hgj-quantum}. At level 3, it splits the vectors into two halves, and merges without filtering. While all lists are obtained via quantum merging/filtering, the main computation is performed after obtaining $L_1^3, L_1^2, L_1^1$, where the main branch is explored using Grover's algorithm: we search through the lists $L_0^3, L_0^2 L_0^1$ without representing them in memory. The quadratic speedup of Grover search makes the tree unbalanced, which is reflected on the naming of its parameters in~Figure~\ref{fig:hgj-quantum}.

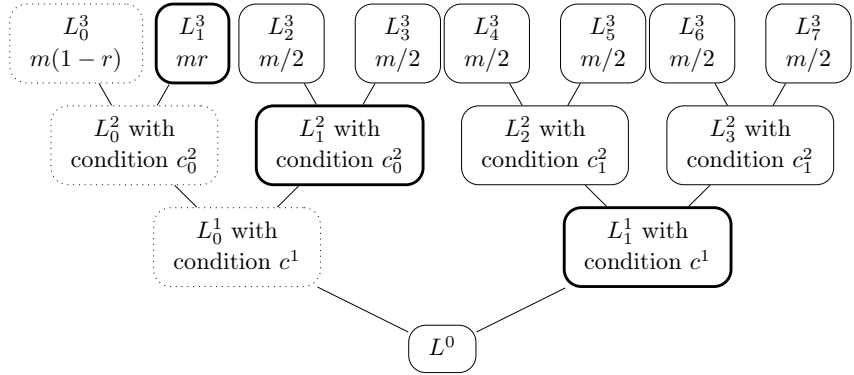
\begin{figure}[htb]
	\centering
	\scalebox{0.9}{\begin{tikzpicture}[grow=up,nodes={draw,rectangle,rounded corners=.25cm,->}, level 1/.style={sibling distance=60mm}, level 2/.style={sibling distance=30mm}, level 3/.style={sibling distance=17mm}]
	\node{ \begin{tabular}{c} $L^0$ \end{tabular}}
	    child { node[very thick] {\begin{tabular}{c} $L_1^1$ with \\ condition $c^1$ \end{tabular} }
	    	child { node {\begin{tabular}{c} $L_3^2 $ with \\ condition $c_1^2$ \end{tabular} }
	    		child { node {\begin{tabular}{c} $L_7^3 $ \\ $m/2$\end{tabular} }}
				child { node {\begin{tabular}{c} $L_6^3 $ \\ $m/2$\end{tabular} }}}
	    	child { node {\begin{tabular}{c} $L_2^2 $ with \\ condition $c_1^2$ \end{tabular} }
				child { node {\begin{tabular}{c} $L_5^3 $ \\ $m/2$\end{tabular} }}
				child { node {\begin{tabular}{c} $L_4^3 $ \\ $m/2$\end{tabular} }} }}
	    child { node[dotted] {\begin{tabular}{c} $L_0^1$ with \\ condition $c^1$ \end{tabular} }
	    	child { node[very thick] {\begin{tabular}{c} $L_1^2 $ with \\ condition $c_0^2$ \end{tabular} }
	    		child { node {\begin{tabular}{c} $L_3^3 $\\ $m/2$\end{tabular} }}
				child { node {\begin{tabular}{c} $L_2^3 $\\ $m/2$ \end{tabular} }} 
				 }
	    	child { node[dotted] {\begin{tabular}{c} $L_0^2 $ with \\ condition $c_0^2$ \end{tabular} }
				child { node[very thick]         {\begin{tabular}{c} $L_1^3 $\\ $m r$ \end{tabular} }}			
				child { node[dotted] {\begin{tabular}{c} $L_0^3 $\\ $m(1-r)$ \end{tabular} }} 
				 }};
	\end{tikzpicture}}
	\caption{Quantum HGJ algorithm. Dotted lists are search spaces (they are not stored). Bold lists are stored in QRACM. The first level uses a left-right split of vectors, without filtering.}
	\label{fig:hgj-quantum}
\end{figure}

The \emph{asymptotic} time complexity of the algorithms has the form $\OOt{2^{\beta m}}$, and is the result of summing together the costs of all merging steps. Through the approximation of binomial coefficients, the list sizes are approximated in $\log_2$ and relatively to $m$. The parameters (relative weights, modular conditions and sizes) are numerically optimized. The optimization of~Figure~\ref{fig:hgj-quantum} in~\cite{BBSS20} yields the complexity $\OOt{2^{0.2356 m}}$.

In this paper, we also perform non-asymptotic optimizations for a given $m$. Since we use only $\{0,1\}$-representations, the filtering probability is well known and has a simple expression (Lemma~\ref{lemma:pf}). Since the binomial coefficients can be extended as functions of $\R^2$, we can perform an exact numerical optimization of list sizes for a given $m$. Afterwards, the numbers obtained are rounded, in particular the weights of representations, and we take the point which gives us the best results: smallest complexity and biggest average size for $L^0$.

\paragraph{Example.}
Let us take $n = \log_2 N = 256$, $m = n-1 = 255$, and the structure of~\autoref{fig:hgj-quantum}. We adapt the optimization code of~\cite{BBSS20} by taking the exact exponents (not relative to $n$) and optimize numerically under the constraint $|L^0| = 2^2$ (to ensure that there are solutions). The asymptotic formula would give $2^{0.2356 n} \simeq 2^{60.31}$. Numerical optimization gives us a time $2^{63.81}$, but this admits non-integer parameters and it is only the \emph{maximum} between all steps. By rounding the parameters well, we obtain~\autoref{fig:hgj-numerical}.

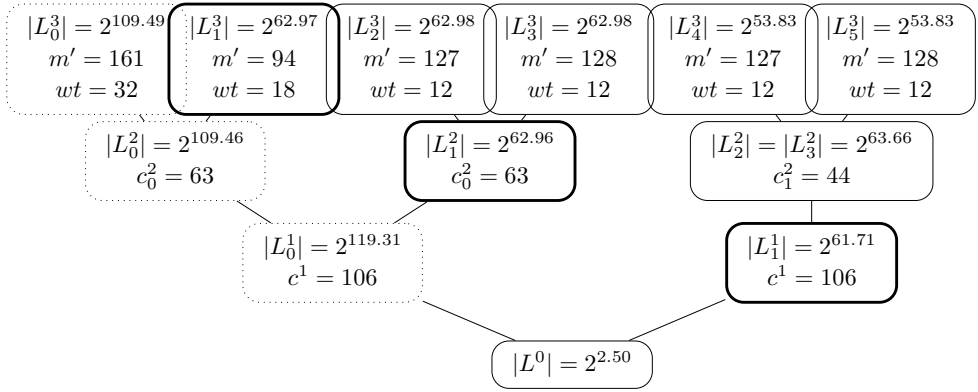
\begin{figure}[htb]
	\centering
	\scalebox{0.9}{\begin{tikzpicture}[grow=up,nodes={draw,rectangle,rounded corners=.25cm,->}, level 1/.style={sibling distance=70mm}, level 2/.style={sibling distance=46mm}, level 3/.style={sibling distance=23mm}]
	\node{ \begin{tabular}{c} $|L^0| = 2^{2.50}$ \end{tabular}}
	    child { node[very thick] {\begin{tabular}{c} $|L_1^1| = 2^{61.71}$ \\ $c^1 = 106$ \end{tabular} }
	    	child { node {\begin{tabular}{c} $|L_2^2| = |L_3^2| = 2^{63.66}$ \\ $c_1^2 = 44$ \end{tabular} }
				child { node {\begin{tabular}{c} $|L_5^3| = 2^{53.83}$ \\ $m' = 128$ \\ $wt = 12$ \end{tabular} }}
				child { node {\begin{tabular}{c} $|L_4^3| = 2^{53.83} $ \\ $m' = 127$ \\ $wt = 12$ \end{tabular} }} }}
	    child { node[dotted] {\begin{tabular}{c} $|L_0^1| = 2^{119.31}$ \\ $c^1 = 106$ \end{tabular} }
	    	child { node[very thick] {\begin{tabular}{c} $|L_1^2| = 2^{62.96} $ \\ $c_0^2 = 63$ \end{tabular} }
	    		child { node {\begin{tabular}{c} $|L_3^3| = 2^{62.98} $\\ $m' = 128$ \\ $wt = 12$ \end{tabular} }}
				child { node {\begin{tabular}{c} $|L_2^3| = 2^{62.98} $\\ $m' = 127$ \\ $wt = 12$ \end{tabular} }} 
				 }
	    	child { node[dotted] {\begin{tabular}{c} $|L_0^2| = 2^{109.46} $ \\ $c_0^2 = 63$ \end{tabular} }
				child { node[very thick]         {\begin{tabular}{c} $|L_1^3| = 2^{62.97} $\\ $m' = 94$\\ $wt = 18$ \end{tabular} }}			
				child { node[dotted] {\begin{tabular}{c} $|L_0^3| = 2^{109.49} $\\ $m' = 161$ \\ $wt = 32$ \end{tabular} }} 
				 }};
	\end{tikzpicture}}
	\caption{Optimization of~\autoref{fig:hgj-quantum} for $m = 255$. The size of the support is indicated by $m'$ and the weight by $wt$.}
	\label{fig:hgj-numerical}
\end{figure}

To compute the quantum time complexity, we consider the list sizes to be exact and use the formula of Lemma~\ref{lemma:quantum-merging} without the $O$. The subtrees on the right can be computed in $2^{65.71}$ operations; the slight increase is due to the fact that we take a sum of their respective terms and not a maximum. In the left branch, we sample from $L^0$ in $2^{63.48}$ operations.  

The actual time complexity is slightly bigger, due to the variation in list sizes, and the constant complexity overhead ($\pi/2$) of Grover search. More importantly, these operations require: $\bullet$~to recompute a sum, using $m$ (controlled) additions modulo $N$; $\bullet$~to test membership in some distribution; $\bullet$~to sample from input distributions $D^n$. The latter can be done using a circuit given in~\cite{ERBLM21}, which for a weight $k$ and $n$ bits, has a gate count $\OOt{ nk}$ and uses $n + 2\ceil{\log(k+1)}$ qubits. All of this boils down to $m$ arithmetic operations or $\OO{m^2}$ quantum gates.

Finally, this sampler works only for a proportion $\frac{1}{p_m} = 2^{-5.33}$ of subset-sum instances, so we need to re-randomize accordingly. After running the optimization for $128 \leq m \leq 1024$ and $n = m+1$, we found that the subset-sum solver would use approximately $2^{0.238m + 9.203}$ arithmetic operations, for a final list $L^0$ of size 2 on average. Under the subset-sum heuristic, we assume an independence between all tuples of elements in the initial lists. Using Chernoff-Hoeffding bounds the probability that the final list is empty is smaller than $e^{-1} \simeq 0.37$. To reduce it to a smaller constant $\varepsilon$, we may simply run multiple independent instances of the solver. This increases the asymptotic complexity by a factor $\OO{- \log \varepsilon }$.

\subsection{Solving Subset-Sum in Superposition}

We now show that we can reuse the structure of the QRACM-based subset-sum algorithm of~Figure~\ref{fig:hgj-quantum} to solve the problem in \emph{superposition} over the target $v$, while still keeping the number of qubits polynomial.

The basic idea is to reduce the problem with a given $v \neq 0$ to $v = 0$: $\innerprod{k'}{b} = 0 \mod N$, where $\vec{k'}$ is a length $m + 1$ vector where we append $-v$ to $\vec{k}$. We can then modify \emph{any} existing tree-based subset-sum algorithm solving this instance to force all vectors in the leftmost leaf list to have a 1 in the last coordinate, and all vectors in the other leaves to have 0 in this coordinate. Then only the lists in the left branch of the tree depend on $v$. The complexity is unchanged.

Following the tree structure of~\autoref{fig:hgj-quantum}, we create the lists $L_1^3, L_1^2, L_1^1$ in a precomputation step. Then we define a quantum algorithm that outputs an element in $L^0$ (or a superposition of such elements), and we run this algorithm in superposition over $v$.

\paragraph{Subset-Sum without QRACM.}
Helm and May~\cite{HM20} showed that quantum subset-sum algorithms using a small classical memory (without quantum access) can have better time-memory tradeoffs than classical ones. They obtained a time $\OOt{2^{0.428m}}$ for a memory $\OO{2^{0.285m}}$, however their algorithm does not have an unbalanced structure like ours.

We improve on this time-memory tradeoff by adapting the tree of~\autoref{fig:hgj-quantum} as follows: we remove $L_0^3$ and $L_1^3$ and their parameters, and directly sample in $L_0^2$. Assuming that the lists $L_1^2$ and $L_1^1$ are precomputed classically, we sample from $L^0$ with the same algorithm, except that it replaces each QRACM access (in time 1) by a sequential memory access (in time $|L_1^2|$ and $|L_1^1|$ for $L_1^2$ and $L_1^1$ respectively), $\ie$, a quantum circuit which encodes the elements of the lists as a sequence of standard gates. The asymptotic optimization gives a time $\OOt{2^{0.4165m}}$ with a memory $\OO{2^{0.2324m}}$. The parameters are displayed in~\autoref{fig:new-asymptotic}.

\begin{figure}[tb]
	\centering
	\scalebox{0.9}{\begin{tikzpicture}[grow=up,nodes={draw,rectangle,rounded corners=.25cm,->}, level 1/.style={sibling distance=75mm, level distance=18mm}, level 2/.style={sibling distance=35mm}, level 3/.style={sibling distance=28mm}]
	\node{ \begin{tabular}{c} $|L^0| = 2^0$ \end{tabular}}
	    child { node[very thick] {\begin{tabular}{c} $|L_1^1| = 2^{0.1669 m}$ \\ $c^1 = 0.4363m$ \\ $wt = 0.1474m$ \end{tabular} }
	    	child { node {\begin{tabular}{c} $|L_2^2| = |L_3^2| = 2^{0.2324m}$ \\ $c_1^2 = 0.1474m$ \\ $wt = 0.0737m$ \end{tabular} }
				child { node {\begin{tabular}{c} $|L_5^3| = 2^{0.1898m}$ \\$m' = m/2$ \\ $wt = 0.0737m/2$ \end{tabular} }}
				child { node {\begin{tabular}{c} $|L_4^3| = 2^{0.1898m} $ \\$m' = m/2$ \\ $wt = 0.0737m/2$ \end{tabular} }} }}
	    child { node[dotted] {\begin{tabular}{c} $|L_0^1| = 2^{0.4990m}$ \\ $c^1 = 0.4363m$ \\ $wt = 0.3527m$ \end{tabular} }
	    	child { node[very thick] {\begin{tabular}{c} $|L_1^2| = 2^{0.1439m}$ \\ $c_0^2 = 0.2878m$ \\ $wt = 0.0886m$ \end{tabular} }
	    		child { node {\begin{tabular}{c} $|L_3^3| = 2^{0.2158m} $\\$m' = m/2$ \\ $wt = 0.0886m/2 $ \end{tabular} }}
				child { node {\begin{tabular}{c} $|L_2^3| = 2^{0.2158m} $\\$m' = m/2$ \\ $wt =0.0886m/2 $ \end{tabular} }} 
				 }
	    	child { node[dotted] {\begin{tabular}{c} $|L_0^2| = 2^{0.5450m} $ \\ $c_0^2 = 0.2878m$ \\ $wt = 0.2641m$ \end{tabular} }
				 }};
	\end{tikzpicture}}
	\caption{Asymptotic optimization of our quantum subset-sum algorithm without QRACM. The lists on the right of the tree are constructed with classical computations, using classical RAM. The lists $L_1^2$ and $L_1^1$ are stored in classical memory without random access.}
	\label{fig:new-asymptotic}
\end{figure}
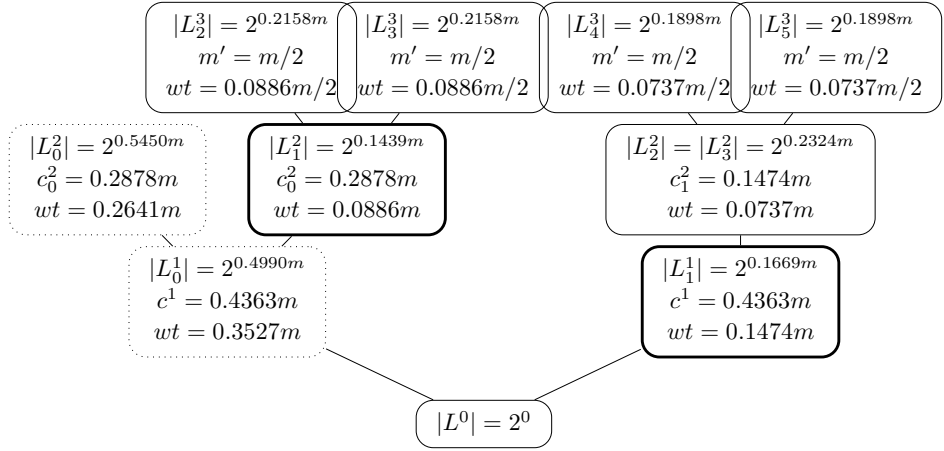

The difference between asymptotic and non-asymptotic optimization is bigger here. For $m = n-1 = 127$, with the constraint $|L^0| = 2^2$, we obtain a time $2^{60.01} > 2^{128 \times 0.4165} = 2^{53.31}$ and a memory $2^{26.82} < 2^{128 \times 0.2324} = 2^{29.75}$. On top of this, we must also take $p_m$ into account.

After running optimizations for $n = 128$ to $1024$, we obtained a count of about $2^{0.418m + 12.851}$ blocks of $m$ arithmetic operations ($m^2$ quantum gates). The point at which the algorithm starts improving over Grover search lies around $n = 157$.

\subsubsection*{Acknowledgments.}
This work has been partially supported by the French Agence Nationale de la Recherche through the France 2030 program under grant agreement No. ANR-22-PETQ-0008 PQ-TLS.

\newcommand{\etalchar}[1]{$^{#1}$}


\begin{thebibliography}{CLM{\etalchar{+}}18}

\bibitem[nistreport]{nistreport}
Gorjan Alagic, David Cooper, Quynh Dang, Thinh Dang, John M. Kelsey, Jacob Lichtinger, Yi-Kai Liu,  Carl A. Miller, Dustin Moody, Rene Peralta, Ray Perlner, Angela Robinson, Daniel Smith-Tone and Daniel Apon.
\newblock Status Report on the Third Round of the NIST Post-Quantum Cryptography Standardization Process.
\newblock In {\em NIST Interagency/Internal Report (NISTIR), National Institute of Standards and Technology, Gaithersburg, MD, July, 2022}, 2022.

\bibitem[nistcall]{nistcall}
NIST.
\newblock Submission Requirements and Evaluation Criteria for the Post-Quantum Cryptography Standardization Process.
\newblock In {\em NIST Interagency/, National Institute of Standards and Technology, Gaithersburg, MD, July, 2016}, 2016.
  
\bibitem[AD97]{AD97}
Mikl{\'{o}}s Ajtai and Cynthia Dwork.
\newblock A public-key cryptosystem with worst-case/average-case equivalence.
\newblock In {\em Proceedings of the Twenty-Ninth Annual {ACM} Symposium on the
  Theory of Computing, El Paso, Texas, USA, May 4-6, 1997}, pages 284--293,
  1997.
  
\bibitem[AFMP20]{AFMP20}
Navid Alamati and Luca De Feo and Hart Montgomery and Sikhar Patranabis.
\newblock Cryptographic Group Actions and Applications.
\newblock In {\em Proceedings of Asiacrypt 2022, LNCS}, pages 411--439, 2022.

\bibitem[BBSS20]{BBSS20}
Xavier Bonnetain, R{\'e}mi Bricout, Andr{\'e} Schrottenloher, and Yixin Shen.
\newblock Improved classical and quantum algorithms for subset-sum.
\newblock In Shiho Moriai and Huaxiong Wang, editors, {\em Advances in
  Cryptology -- ASIACRYPT 2020}, pages 633--666, Cham, 2020. Springer
  International Publishing.

\bibitem[BCJ11]{BCJ11}
Anja Becker, Jean{-}S{\'{e}}bastien Coron, and Antoine Joux.
\newblock Improved generic algorithms for hard knapsacks.
\newblock In {\em Advances in Cryptology - {EUROCRYPT} 2011 - 30th Annual
  International Conference on the Theory and Applications of Cryptographic
  Techniques, Tallinn, Estonia, May 15-19, 2011. Proceedings}, pages 364--385,
  2011.
  
\bibitem[BHT98]{BHT98}Brassard, G., Høyer, P. \& Tapp, A. Quantum Cryptanalysis of Hash and Claw-Free Functions. {\em LATIN}. \textbf{1380} pp. 163-169 (1998)
  
\bibitem[BJLM13]{BJLM13}
Bernstein, D., Jeffery, S., Lange, T. and Meurer, A. 
\newblock Quantum Algorithms for the Subset-Sum Problem. 
\newblock {\em Post-Quantum Cryptography 2011}. \textbf{7932} pp. 16-33 (2013,6)


\bibitem[BKV19]{BKV19}Beullens, W., Kleinjung, T. \& Vercauteren, F. CSI-FiSh: Efficient Isogeny Based Signatures Through Class Group Computations. {\em ASIACRYPT (1)}. \textbf{11921} pp. 227-247 (2019)

\bibitem[Bon19]{Bon19}
Xavier Bonnetain.
\newblock Improved low-qubit hidden shift algorithms, 2019.
\newblock arXiv:1901.11428.

\bibitem[BS18]{BS18}
Xavier Bonnetain and Andr{\'e} Schrottenloher.
\newblock Quantum security analysis of csidh.
\newblock Cryptology ePrint Archive, Report 2018/537, 2018.
\newblock ia.cr/2018/537.

\bibitem[CCJR22]{CCJR22}Chavez-Saab, J., Chi-Dominguez, J., Jaques, S. \& Rodriguez-Henriquez, F. The SQALE of CSIDH: sublinear Vélu quantum-resistant isogeny action with low exponents. {\em J. Cryptogr. Eng.}. \textbf{12}, 349-368 (2022)

\bibitem[CJS14]{CJS14}
Andrew Childs, David Jao, and Vladimir Soukharev.
\newblock Constructing elliptic curve isogenies in quantum subexponential time,
  2014.

\bibitem[CLM{\etalchar{+}}18]{CLM+18}
Wouter Castryck, Tanja Lange, Chloe Martindale, Lorenz Panny, and Joost Renes.
\newblock Csidh: An efficient post-quantum commutative group action.
\newblock In Thomas Peyrin and Steven Galbraith, editors, {\em Advances in
  Cryptology -- ASIACRYPT 2018}, pages 395--427, Cham, 2018. Springer
  International Publishing.
  
\bibitem[DH76]{DH76}Diffie, W. \& Hellman, M. New directions in cryptography. {\em IEEE Transactions On Information Theory}. \textbf{22}, 644-654 (1976)

\bibitem[DPV19]{DPV19}Decru, T., Panny, L. \& Vercauteren, F. Faster SeaSign Signatures Through Improved Rejection Sampling. {\em PQCrypto}. \textbf{11505} pp. 271-285 (2019)

\bibitem[EH99]{EH99}
Mark Ettinger and Peter H{\o}yer.
\newblock On quantum algorithms for noncommutative hidden subgroups.
\newblock {\em Lecture Notes in Computer Science}, page 478–487, 1999.

\bibitem[EHK04]{EHK04}Ettinger, M., Høyer, P. \& Knill, E. The quantum query complexity of the hidden subgroup problem is polynomial. (Information Processing Letters, 91(1):43-48,2004), arXiv:quant-ph/0401083

\bibitem[ERBLM21]{ERBLM21}Esser, A., Ramos-Calderer, S., Bellini, E., Latorre, J. \& Manzano, M. An Optimized Quantum Implementation of ISD on Scalable Quantum Resources. {\em IACR Cryptol. EPrint Arch.}. pp. 1608 (2021)

\bibitem[FG19]{FG19}De Feo, L. \& Galbraith, S. SeaSign: Compact Isogeny Signatures from Class Group Actions. {\em EUROCRYPT (3)}. \textbf{11478} pp. 759-789 (2019)

\bibitem[HJ10]{HJ10}Howgrave-Graham, N. \& Joux, A. New generic algorithms for hard knapsacks. {\em Eurocrypt 2010}. \textbf{6110} (2010)

\bibitem[HM18]{HM18}
Alexander Helm and Alexander May.
\newblock Subset sum quantumly in $1.17^n$.
\newblock In Stacey Jeffery, editor, {\em 13th Conference on the Theory of
  Quantum Computation, Communication and Cryptography (TQC 2018)}, volume 111
  of {\em Leibniz International Proceedings in Informatics (LIPIcs)}, pages
  5:1--5:15, Dagstuhl, Germany, 2018. Schloss Dagstuhl--Leibniz-Zentrum fuer
  Informatik.
  
\bibitem[HM20]{HM20}Helm, A. \& May, A. The Power of Few Qubits and Collisions - Subset Sum Below Grover's Bound. {\em PQCrypto}. \textbf{12100} pp. 445-460 (2020)



\bibitem[Kup05]{Kup05}
Greg Kuperberg.
\newblock A subexponential-time quantum algorithm for the dihedral hidden
  subgroup problem, 2005.
\newblock arXiv:quant-ph/0302112.

\bibitem[Kup13]{Kup13}
Greg Kuperberg.
\newblock Another subexponential-time quantum algorithm for the dihedral hidden
  subgroup problem, 2013.
\newblock arXiv:1112.3333.

\bibitem[LM09]{LM09}
Vadim Lyubashevsky and Daniele Micciancio.
\newblock On bounded distance decoding, unique shortest vectors, and the
  minimum distance problem.
\newblock In {\em Proceedings of CRYPTO 2009}, volume 5677 of {\em Lecture
  Notes in Computer Science}, pages 577--594. IACR, Springer, 2009.

\bibitem[Pei19]{Pei19}Peikert, C. He Gives C-Sieves on the CSIDH. {\em EUROCRYPT 2020}. pp. 463-492 (2020)

\bibitem[Reg02]{Reg02}Regev, O. Quantum Computation and Lattice Problems. {\em FOCS}. pp. 520-529 (2002)

\bibitem[Reg04a]{Reg04b}
Oded Regev.
\newblock New lattice-based cryptographic constructions.
\newblock {\em J. ACM}, 51(6):899–942, 2004.

\bibitem[Reg04b]{Reg04}
Oded Regev.
\newblock A subexponential time algorithm for the dihedral hidden subgroup
  problem with polynomial space, 2004.
\newblock arXiv:quant-ph/0406151.

\bibitem[RSA78]{RSA78}Rivest, R., Shamir, A. \& Adleman, L. A Method for Obtaining Digital Signatures and Public-Key Cryptosystems. {\em Commun. ACM}. \textbf{21}, 120-126 (1978)

\bibitem[Sho94]{Sho94}Shor, P. Algorithms for quantum computation: discrete logarithms and factoring. {\em Proceedings 35th Annual Symposium On Foundations Of Computer Science}. pp. 124-134 (1994)

\bibitem[SSTX09]{SSTX09}
Damien Stehl{\'{e}}, Ron Steinfeld, Keisuke Tanaka, and Keita Xagawa.
\newblock Efficient public key encryption based on ideal lattices.
\newblock In Mitsuru Matsui, editor, {\em Advances in Cryptology - {ASIACRYPT}
  2009, 15th International Conference on the Theory and Application of
  Cryptology and Information Security, Tokyo, Japan, December 6-10, 2009.
  Proceedings}, volume 5912 of {\em LNCS}, pages 617--635. Springer, 2009.
  
\bibitem[Wag02]{Wag02}Wagner, D. A generalized birthday problem. {\em CRYPTO 2002}. \textbf{2442} pp. 288-303 (2002)

\end{thebibliography}
\end{document}